
\documentclass[12 pt]{article}
\usepackage{cite,amsmath,amsfonts,amsthm,fullpage}
\usepackage{youngtab}
\usepackage{cite,amsmath,amsfonts,amsthm,fullpage, amssymb, mathtools, url}
\usepackage[usenames,svgnames]{xcolor}
\usepackage{youngtab, enumitem, comment, mathdots, graphicx, float}
\usepackage[symbol]{footmisc}
\usepackage[pagebackref=true]{hyperref} 

\textwidth 16truecm
\textheight 22truecm
\topmargin -1truecm
\oddsidemargin -0truecm

\theoremstyle{plain}
\theoremstyle{definition}
\newtheorem{theorem}{Theorem}[section]
\newtheorem{lemma}[theorem]{Lemma}
\newtheorem{definition-theorem}[theorem]{Definition-Theorem}
\newtheorem{definition-proposition}[theorem]{Definition-Proposition}

\newtheorem{proposition}[theorem]{Proposition}
\newtheorem{corollary}[theorem]{Corollary}
\newtheorem{example}{Example}[section]
\newtheorem{examples}{Example}[subsection]

\newtheorem{remark}{Remark}[section]
\newtheorem{remarks}{Remarks}[section]

\numberwithin{equation}{section} 

\DeclareMathOperator{\End}{End}

\DeclareMathOperator{\GL}{GL}

\DeclareMathOperator{\SO}{SO}


\def\ra{{\rightarrow}}

\def\det{\mathrm {det}}

\def\Pf{\mathrm {Pf}}

\def\End{\mathrm {End}}

\def\span{\mathrm {span}}

\def\span{\mathrm {span}}

\def\&{&{\hskip -20pt}}


\DeclarePairedDelimiter{\no}{:}{:}

\def\be{\begin{equation}}
\def\ee{\end{equation}}

\def\bea{\begin{eqnarray}}
\def\eea{\end{eqnarray}}

\def\bt{\begin{theorem}}
\def\et{\end{theorem}}

\def\bex{\begin{example}\small \rm}
\def\eex{\end{example}}

\def\bexs{\begin{examples}\small \rm}
\def\eexs{\end{examples}}

\def\ra{\rightarrow}
\def\ss{\subset}

\def\br{\begin{remark}\small \rm}
\def\er{\end{remark}}



\def\FF {{\mathcal F}}

\def\HH{{\mathcal H}}

\def\QQ {{\mathcal Q}}


\def\Mb{\mathbf{M}}
\def\Nb{\mathbf{N}}

\def\Zb{\mathbf{Z}}
\def\ab{\mathbf{a}}

\def\pb{\mathbf{p}}

\def\rb{\mathbf{r}}

\def\tb{\mathbf{t}}


\def\xb{\mathbf{x}}
\def\yb{\mathbf{y}}



 \def\grg{\mathfrak{g}}

 \def\grl{\mathfrak{l}}

 \def\gro{\mathfrak{o}}

 \def\grs{\mathfrak{s}}




\def\bp{\begin{Proposition}\rm}
\def\ep{\end{Proposition}}
\def\bc{\begin{corollary}}
\def\ec{\end{corollary}}
\def\bl{\begin{lemma}\em}
\def\el{\end{lemma}}
\def\be{\begin{equation}}
\def\ee{\end{equation}}
\def\br{\begin{remark}\rm\small}
\def\er{\end{remark}}
\def\brs{\begin{remarks}.\\ \rm\
\begin{enumerate}}
\def\ers{\end{enumerate}\end{remarks}}
\def\bea{\begin{eqnarray}}
\def\eea{\end{eqnarray}}


\begin{document}

\begin{center}
\begin{Large}\fontfamily{cmss}
\fontsize{17pt}{27pt}
\selectfont
	\textbf{Polynomial KP and BKP $\tau$-functions and correlators}
	\end{Large}
	
\bigskip \bigskip
\begin{large} 
J. Harnad$^{1, 2}$\footnote[1]{e-mail:harnad@crm.umontreal.ca} 
and A. Yu. Orlov$^{3,4}$\footnote[2]{e-mail:orlovs55@mail.ru} 
 \end{large}
 \\
\bigskip

\begin{small}
$^{1}${\em Centre de recherches math\'ematiques, Universit\'e de Montr\'eal, \\C.~P.~6128, succ. centre ville, Montr\'eal, QC H3C 3J7  Canada}\\
$^{2}${\em Department of Mathematics and Statistics, Concordia University\\ 1455 de Maisonneuve Blvd.~W.~Montreal, QC H3G 1M8  Canada}\\
$^{3}${\em Shirshov Institute of Oceanology, Russian Academy of Science, Nahimovskii Prospekt 36, Moscow 117997, Russia }\\
$^{4}${\em 
ITEP (Institute of Theoretical and Experimental Physics), Moscow, ul. B.Cheremushkinskaya 25, Russia, 117218}
\end{small}
 \end{center}

\medskip

\begin{abstract}
Lattices of polynomial KP and BKP $\tau$-functions labelled by partitions, with  the flow variables equated
 to finite power sums, as well as associated multipair KP  and multipoint  BKP correlation functions,
 are expressed via generalizations of  Jacobi's bialternant formula for Schur functions  and 
Nimmo's Pfaffian ratio formula for Schur $Q$-functions. These are obtained by applying Wick's theorem to 
fermionic  vacuum expectation value representations in which the infinite group element acting 
on the lattice of basis states stabilizes the vacuum.
     \end{abstract}
\bigskip


\section{Lattices of KP and BKP $\tau$-functions: fermionic constructions and polynomial solutions}

In \cite{HO2}  fermionic vacuum expectation value (VEV) representations were used to construct lattices of KP $\tau$-functions 
\cite{Sa} labelled by integer partitions, and BKP $\tau$-functions \cite{DJKM1} labelled by strict partitions.  
When the underlying infinite 
$\GL(\infty)$ and $\SO(\infty)$ group elements are related by an appropriately defined factorization, the KP $\tau$-functions, 
restricted to vanishing  even flow variables, were shown to be expressible as finite sums over products of BKP $\tau$-functions. 
It was also shown that, choosing group elements that stabilize the vacuum,   the resulting $\tau$-functions are 
symmetric polynomials, generalizing the ``building block'' solutions  for the KP  and BKP hierarchies consisting 
of Schur functions $s_\lambda$ and Schur $Q$  functions $Q_\alpha$, respectively,  thereby giving a further perspective 
on the well-studied classes  of KP and BKP $\tau$-functions  of polynomial  
 type \cite{BL1, BL2, You,  SV, Iv,  BFH,  vdLO, KvdL1,  KvdL2, HL, Roz}.
 
 In the present work, it is shown that when the flow variables are restricted to power sums in 
 a finite number of auxiliary variables,  Wick's theorem  implies finite determinantal and Pfaffian representations
 for such polynomial $\tau$-functions, as well as $n$-pair KP correlators  and $n$-point 
 BKP correlators,  similar to Jacobi's bialternant expression for Schur functions and Nimmo's 
 Pfaffian ratio formula for Schur's Q-functions.
 
 Following the approach developed by the Sato school  \cite{Sa, DJKM1, DJKM2, JM},
 the basic framework of operators and  flows on fermionic Fock space is recalled in Section \ref{fermi_VEV_KP_BKP},
 both for the charged fermion case and for a pair of commuting classes of neutral fermions.
Charged fermions are used to construct VEV representations of a lattice of KP $\tau$-functions
 $\{\pi_\lambda(g)(\tb)\}$,  where $\tb = (t_1, t_2, \dots )$ is the infinite sequence of KP flow variables,
as well as $n$-pair correlators, labelled by pairs $(g, \lambda)$ consisting of an element $g\in \GL(\HH)$ of the
 infinite group of  invertible general linear transformations on an underlying Hilbert space $\HH$ and an integer partition
 $\lambda$.  Neutral fermions are similarly used to construct pairs of VEV representations of  a lattice of BKP 
 $\tau$-functions  $\{\kappa_\alpha(h^\pm)(\tb_B)\}$,  where $\tb_B=(t_1, t_3, \dots)$ is the infinite sequence 
 of BKP flow variables,  and $2n$-point correlators, labelled by pairs $(h^\pm , \alpha)$ consisting of an element 
 $h^\pm\in \SO(\HH_{\phi^{\pm}})$ of the infinite group of special orthogonal transformations on an underlying 
pair of mutually orthogonal, complementary subspaces $\HH_{\phi^\pm}\ss \HH + \HH^*$ 
of the direct sum of $\HH$ with its dual $\HH^*$,  with respect to the natural scalar product, 
  and a strict partition $\alpha$,

 It is known that when the group element $g\in \GL(\HH)$ is chosen to be
  upper triangular, and hence its fermionic representation $\hat{g}$ stabilizes the vacuum state  
  $|0\rangle$, the resulting KP $\tau$-functions  are polynomials in the KP flow variable $\tb=(t_1, t_2, \dots)$
  and, in fact, all polynomial KP $\tau$-functions are expressible in this way \cite{KvdL1, KvdLRoz}.
 In Section \ref{generalized_bialtern}, it is shown that when the flow variables are 
  restricted to finite power sums 
 \be
 t_j = [\xb]_j := \tfrac{1}{j} \sum_{a=1}^n x_a^j,
 \label{n_power_sums}
 \ee
 in a set of $n$ auxiliary variables $\xb= (x_1, \dots, x_n)$,  Wick's theorem leads to an expression for
 the KP $\tau$-functions  as the ratio of a finite determinant of alternant form  and the Vandermonde determinant, 
 with the entries in the numerator alternant polynomials in the $\xb$ variables, as  in Jacobi's bialternant fomula 
 for Schur functions \cite{Mac1}. The resulting formula for the  KP $\tau$-function is
 thus a generalization of Jacobi's formula. Although these $\tau$-functions are, in general, inhomogeneous 
 symmetric polynomials, they nevertheless share many of the properties of Schur functions \cite{Mac1, HL, SV},
 such as the Giambelli identity \cite{Mac1}.
  
It is also known,  in the BKP case, that when the group elements $h^\pm\in \SO(\HH_{\phi^\pm})$ 
are similarly chosen to be upper triangular, so the fermionic representation $\hat{h}^\pm$ again
  stabilizes the vacuum state  $|0\rangle$, the resulting BKP $\tau$-functions are again polynomials in the BKP
  flow variables $\tb_B=(t_1, t_3, \dots )$ and, in fact, this exhausts the full set of polynomial BKP $\tau$-functions \cite{KvdL2}.
In Section \ref{BKP_gen_nimmo}, it is shown that if the flow variables are  restricted to power sums $[\xb]$ in the auxiliary variables 
$(x_1, \dots, x_n)$,  Wick's theorem leads to an expression for the BKP $\tau$-function as a ratio,  in which the numerator is a 
finite Pfaffian whose entries are either polynomials in an even number  of auxiliary variables $\{x_a\}_{a=1, \dots 2n} $'s,  or  
 rational combinations $\{M_{ab} :=\frac{x_a-x_b }{ x_a + x_b}\}$  of these,  and the denominator is the Pfaffian 
 of the skew  $2n \times 2n$ matrix $M$ with only these rational entries,  thereby generalizing  Nimmo's 
 formula \cite{Nim} for Schur  $Q$ functions, with which these, again, share many properties.
  
  In Section \ref{KP_BKP_correls},  analogous finite determinantal formulae are deduced for the $n$-pair  correlation functions
  associated to the lattice  of KP $\tau$-functions $\{\pi_\lambda(g)(\tb)\}$. For group elements $\hat{g}$ that
  stabilize the vacuum, these again reduce to polynomials in the coordinates of the points $\{x_a, y_a\}_{a=1, \dots, n} $
  appearing in the correlation function, multiplied by an explicit rational factor.
   For the BKP case, the $2n$-point correlation functions  are similarly expressed 
in terms of restrictions of  the BKP $\tau$-functions $\{\kappa_\alpha(h^\pm)(\tb_B)\}$ to power sum 
variables in the coordinates of the points. When the group elements $\hat{h}^\pm$ stabilize
  the vacuum, the generalized Nimmo formula derived in Section  \ref{BKP_gen_nimmo} thus
  provides a Pfaffian ratio expression for the $2n$-point correlators as polynomials in the point coordinates
 multiplied by explicit rational factors.
  
  Section \ref{examples} illustrates these results with some examples of polynomial KP and BKP $\tau$-functions,
  as well as with KP  and BKP $2$-point correlators.

\section{Fermionic VEV representations of lattices of KP and BKP $\tau$-functions}
\label{fermi_VEV_KP_BKP}

We begin by recalling the construction \cite{HO2} of lattices of KP  $\tau$-functions $\{\pi_\lambda(g)(\tb)\}$
and BKP  $\tau$-functions $\{\kappa_\alpha(h) (\tb_B)\}$ labelled, respectively, by pairs $(g, \lambda)$ consisting,
in the KP case, of an infinite group element $g\in \GL(\HH)$) and  an integer partition 
 $\lambda=(\lambda_1, \dots, \lambda_{\ell(\lambda)})$,  and in the BKP case by pairs $(h^\pm, \alpha)$ 
 consisting of  an infinite orthogonal  group element $h^\pm \in \SO(\HH_{\phi^\pm})$ and a strict partition 
 $\alpha =(\alpha_1, \dots, \alpha_r)$  with an even number $r$ of parts.
(A survey of the use of fermionic methods in the theory of $\tau$-functions may be found in \cite{MJD}, Chapt.~3
and \cite{HB}, Chapts.~5 and 7.)
 
 \subsection{Lattice of KP $\tau$-functions $\pi_\lambda(g) (\tb) $}
 \label{KP_pi_lambda}
 
 The lattice of KP $\tau$-functions introduced   in \cite{HO2} may be expressed as 
 fermionic vacuum state expectation values,
  \be
 \pi_\lambda(g) (\tb) :=  \langle 0 | \hat{\gamma}_+(\tb) \hat{g}) | \lambda \rangle,
 \label{pi_lambda_A_def}
 \ee
 where $|0\rangle$ is the vacuum state in the charge  zero  sector of the fermionic Fock space $\FF$, 
 which is the semi-infinite exterior product space on a separable Hilbert space $\HH$ with denumerable basis
 $\{e_j\}_{j\in\Zb}$
\be
 \FF:= \Lambda^{\infty/2} =\bigoplus_{n\in \Zb} \FF_n,
 \ee
 $|\lambda\rangle$ is an orthonormal basis element in  the zero fermionic charge sector 
 $\FF_0$, labelled  by an integer partition $\lambda$,  
and
  \be
 \tb = (t_1, t_2, \dots )
 \ee
 denotes the KP flow variables. The latter may alternatively be interpreted as the evaluation
 of normalized power sums
 \be
 \tb = [\xb] := ([\xb ]_1, [\xb]_2, \dots, [\xb ]_j, \dots),  
  \label{power_sum_tb}
  \ee
  where
  \be
  [\xb]_j := \tfrac{1}{j}\sum_{a=1}^n x_a^j, \quad j\in \Nb^+
 \ee
  in terms of a finite or infinite set of bosonic variables $\xb = (x_1, \dots , x_n)$
  
 The dual basis elements $\{e^j \}_{j\in \Zb}$ for  $\HH^*$ are defined by
\be
e^j(e_k) = \delta_{jk}, \quad j,k \in \Zb
\ee
and the vacuum state in the $\FF_0$ sector is denoted
 \be
 |\emptyset;0\rangle = |0\rangle := e_{-1} \wedge e_{-2} \wedge \cdots.
 \ee
The  charged fermionic creation and annihiliation operators, $\{\psi_, \psi^\dag_j \in \End(\FF)\}_{j\in \Zb}$
are defined as the exterior and inner products with the basis elements, and dual basis elements, respectively
 \be
 \psi_j := e_j \wedge, \quad \psi^\dag_j := i_{e^j}.
 \ee
They satisfy the usual anti-commutation relations
 \be
 [\psi_j, \psi^\dag_k]_+ = \delta_{jk}, \quad  [\psi_j, \psi_k]_+ = 0, \quad  [\psi^\dag_j, \psi^\dag_k]_+ =0,    \quad j,k \in \Zb
 \label{psi_psi_dag_anticomm} 
 \ee
and generate the fermionic representation of the Clifford algebra on $\HH + \HH^*$ corresponding to the scalar product
 \be
 Q(v + \mu, w + \nu):= \nu(v) + \mu(w), \quad v,w \in \HH, \ \mu, \nu \in \HH^*.
 \label{Q_def}
 \ee
They also satisfy the vacuum annihilation conditions
 \bea
 \psi_{-j} |0 \rangle &\& =0, \quad \psi^\dag_{j-1}|0\rangle = 0,  
 \label{charged_fermi_vac_annih}
 \\
 \langle 0 |\psi^\dag_{-j}  &\& =0, \quad \langle 0 | \psi_{j-1} = 0, \quad j\in \Nb^+  .
 \label{charged_fermi_dual_vac_annih}
 \eea 
  
 For an integer partition  $\lambda = (\lambda_1, \dots ,\lambda_{\ell(\lambda)})$ of length $\ell(\lambda)$
  with Frobenius  indices \cite{Mac1},   
 \be
 (\alpha|\beta) = (\alpha_1, \dots, \alpha_r|\beta_1, \dots, \beta_r),
 \label{alpha_beta}
 \ee
 the basis state $|\lambda\rangle$ in the $n=0$ sector  $\FF_0$ is  ({\hskip -.5 pt}\cite{JM},  Chapt. 3, \cite{HB},  Chapt.~5)
 \be
|\lambda\rangle := |\lambda; 0\rangle= (-1)^{\sum_{j=1}^r \beta_j }\prod_{j=1}^r \psi_{\alpha_j} \psi^\dag_{\-\beta_j -1} | 0\rangle.
\label{lambda_0_bsasis_state}
 \ee
 When substituted in  (\ref{pi_lambda_A_def}), this yields the VEV representation
   \be
 \pi_\lambda(g) (\tb) := 
 (-1)^{\sum_{j=1}^r \beta_j }\ \langle 0 | \hat{\gamma}_+(\tb) \hat{g} \prod_{j=1}^r \psi_{\alpha_j} \psi^\dag_{\-\beta_j -1} | 0\rangle,
 \label{pi_lambda_A_VEV_re[}
 \ee
 
 More generally,  an orthonormal basis $\{|\lambda; n\rangle\}$ for the charge $n$ subspace $\FF_n\ss \FF$
is provided by
\bea
|\lambda; n\rangle  &\& = (-1)^{\sum_{j=1}^r \beta_j }\prod_{j=1}^r \psi_{\alpha_j+n} \psi^\dag_{\-\beta_j +n-1} |n\rangle
\label{psi_alpha_beta_vac_n}\\
&\& = e_{l_1(n)} \wedge e_{l_2(n)} \wedge \cdots \cr
 &\&= \psi_{l_1(n)} \wedge \cdots \wedge \psi_{l_{\ell(\lambda)}(n)} | n - \ell(\lambda)\rangle, \quad n \in \Zb, 
 \label{lambda_n_basis}
\eea
where
\be
l_i (n):= \lambda_i - i + n, \quad i\in \Nb^+
\ee
are the {\em particle positions} (({\hskip -.5 pt}\cite{HB}, Chapt.~5)   associated with the partition $\lambda$ in the $\FF_n$ sector
(with $\lambda_j:= 0$ for $j> \ell(\lambda)$),  and
\be
|n \rangle := |\emptyset; n\rangle = e_{n-1} \wedge e_{n-2}\wedge \cdots
\ee
is the  vacuum state in the $\FF_n$ sector.  The dual basis vectors, denoted $\{\langle \lambda; n|\}$, satisfy
\be
\langle \lambda; n| \mu ; m \rangle = \delta_{\lambda \mu} \delta_{nm}.
\ee
  
 The fermionic representation $\hat{g}$ of group elements $g \in \GL_0(\HH)$ in the 
 identity component is
 \be
\hat{g}:=\hat{g}(\tilde{A}) = e^{\hat{\tilde{A}}},
\label{hat_g_exp_A_def}
\ee
where
\be
\hat{\tilde{A}} :=\sum_{i,j \in \Zb} \tilde{A}_{jk} \no{\psi_j \psi^\dag_k}, 
\label{hat_tilde_A_def}
\ee
with
$\{\tilde{A}_{jk}\}_{j,k \in \Zb}$ the elements of a doubly infinite matrix $\tilde{A}$ such
that, in the basis $\{e_j\}_{j\in \Zb}$  $g$ is represented by
\be
g (\tilde{A})= e^{\tilde{A}}.
\label{g_tilde_A}
\ee
Normal ordering of the product of  a pair of linear elements
\be
\no{\hat{L}_1 \hat{L}_2} = \hat{L}_1 \hat{L}_2 - \langle 0 | \hat{L}_1 \hat{L}_2 | 0 \rangle
\label{normal_order}
\ee 
is defined so the vacuum expectation value (VEV) vanishes.

 The  KP  flows are generated by the infinite abelian subgroup $\Gamma_+\ss \GL_0(\HH)$
of shift  flows
\be
\Gamma_+ = \{\gamma_+(\tb) := e^{\sum_{j=1}^\infty t_j\Lambda^j}, \quad \Lambda (e_i) = e_{i-1}, \ i \in \Zb
\ee
  whose elements $\gamma_+(\tb)$ are represented fermionically as
 \be
  \hat{\gamma}_+(\tb) := e^{\sum_{j=1}^\infty t_j J_j},
 \ee
where
\be
J_j := \sum_{k\in \Zb} \psi_k \psi^\dag_{k+j}, \quad  j  \in \Nb^+
\label{J_j_def}
\ee
are the charged current components. These mutually commute
\be
[J_j, J_k] =0, \quad \forall \ j,k \in \Nb^+
\ee
and annihilate the vacuum state
\be
J_j |0 \rangle = 0, \quad \forall \ j \in \Nb^+.
\ee

The lattice of KP $\tau$-functions (\ref{pi_lambda_A_def}) can be extended to an infinite sequence of KP $\tau$-functions
\bea
\pi_{\lambda, n} (g) (\tb) &\&:=  \langle n | \hat{\gamma}_+(\tb) \hat{g} | \lambda ; n \rangle =(-1)^{\sum_{j=1}^r \beta_j }\ \langle n | \hat{\gamma}_+(\tb) \hat{g}) \prod_{j=1}^r \psi_{\alpha_j} \psi^\dag_{\-\beta_j -1} | n\rangle,
 \label{pi_lambda_n_A_def}
\eea
defined in each sector $\FF_n$ which, for each pair $(g, \lambda)$, form an integer
 lattice of mKP $\tau$-functions. (See  \cite{JM, DJKM2} or \cite{HB},  Chapt.~7.)

We also introduce fermionic field operators
\be
\psi(z) := \sum_{j\in \Zb} \psi_j z^j, \quad \psi^\dag_j(z) := \sum_{j\in \Zb} \psi^\dag_j z^{-j-1}
\label{psi_fields}
\ee
and ``dressed'' creation and annihilation operators
\bea
\psi_j(\tilde{A})&\& :=\hat{g}(\tilde{A})\psi_j  \hat{g}^{-1}(\tilde{A}) = \sum_{k\in \Zb} g_{kj}(\tilde{A}) \psi_k
\label{psi_tilde_A_def} \\
  \psi_j^*(\tilde{A})
&\&:=\hat{g}(\tilde{A})\psi_j^\dag  \hat{g}^{-1}(\tilde{A}) =\sum_{k\in \Zb} g^{-1}_{jk}(\tilde{A}) \psi^\dag_k. 
\label{psi_star_tilde_A_def}
\eea

In the special case where the matrix $\tilde{A}$ is strictly upper triangular 
\be
\tilde{A}_{ij} = 0 \ \text{ if } i\ge j
\label{tilde_A_triang}
\ee
the $\grg\grl(\HH)$ algebra elements $\hat{\tilde{A}}$ annihilate the vacuum state
\be
\hat{\tilde{A}}|0 \rangle = 0, \quad \hat{A}^\pm|0\rangle,
\label{tilde_A_triang_vac_annihil}
\ee
and the group elements $\hat{g}(\tilde{A})$  stabilize it
\be
\hat{g}(\tilde{A})|0 \rangle.
\label{g_vac_stabil}
\ee
We then have
\bea
\psi_j(\tilde{A})&\&  = \sum_{k=-\infty}^j g_{kj}(\tilde{A}) \psi_k, \quad g_{kk}(\tilde{A})  =1, \ \forall  k \in \Zb, 
\label{psi_tilde_A_triang} \\
  \psi_j^*(\tilde{A}) &\& =\sum_{k=j} ^\infty g^{-1}_{jk}(\tilde{A}) \psi^\dag_k, \quad g^{-1}_{kk}(\tilde{A})  =1, \ \forall  k\ \in \Zb.
\label{psi_star_tilde_A_triang}
\eea
Now define two sequences of monic polynomials
\bea
p_j(x|\tilde{A})&\&:= \sum_{k=0}^j P_{kj}(\tilde{A}) x^k, \quad P_{jj}(\tilde{A}) = 1, \quad j\in \Nb,
\label{p_j_A_tilde_x}
 \\
p^*_j(y|\tilde{A})&\&:= \sum_{k=0}^j P^*_{jk}(\tilde{A}) y^k, \quad P^*_{jj}(\tilde{A}) = 1, \quad j\in \Nb,
\label{p_star_j_A_tilde_x}
\eea
where the upper triangular matrix of coefficients $\{P_{jk}(A)\}_{j,k \in \Nb}$ is
the $\Nb \times \Nb$ block of the $\Zb \times \Zb$ upper triangular matrix $g(\tilde{A})$ 
obtained by exponentiating the strictly upper triangular matrix $\tilde{A}$ as in (\ref{g_tilde_A})
\be
P_{jk}(\tilde{A}) = g_{jk}(\tilde{A}),   \quad P_{jj}=1, \quad j\le k, \ j,k \in \Nb.
\label{P_kj_coeffs}
\ee
while the lower triangular matrix of coefficients $\{P^*_{jk}(A)\}_{j,k \in \Nb}$ is the $(-\Nb^+ )\times (-\Nb^+)$
block of the inverse matrix $g^{-1}$
\be
P^*_{jk}= g^{-1}_{-j-1,-k-1}.  \quad P^*_{jj}=1, \quad j\ge k, \ j,k \in \Nb.
\label{P_star_jk_coeffs}
\ee

\begin{lemma}
\label{VEV_p_j}
The monic polynomials sequences $\{p_j(x|\tilde{A}) \}_{j\in \Nb}$  and $\{p^*_j(y|\tilde{A}) \}_{j\in \Nb}$
have the following VEV representations:
\bea
p_j(x | \tilde{A}) &\&=  x^{-1}\langle 0 | \psi^\dag(x^{-1}) \psi_j(\tilde{A})|0\rangle \quad \forall \, j  \in \Nb,
\label{A_tilde_polynoms} \\
p^*_j(y | \tilde{A}) &\&=  y^{-1}\langle 0 | \psi(y^{-1}) \psi^*_{-j-1}(\tilde{A})|0\rangle \quad \forall \, j  \in \Nb,
\label{A_tilde_polynoms_dual} 
\eea
\end{lemma}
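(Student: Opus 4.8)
The plan is to evaluate both vacuum expectation values directly, by expanding the generating-series field operators and the dressed creation/annihilation operators into their charged-fermion modes and then reducing everything to elementary two-point contractions against the vacuum. The only input needed beyond these mode expansions is that the nonzero vacuum contractions are
\be
\langle 0 | \psi^\dag_m \psi_k | 0 \rangle = \delta_{mk} \quad (m = k \ge 0), \qquad \langle 0 | \psi_m \psi^\dag_k | 0 \rangle = \delta_{mk} \quad (m = k \le -1),
\ee
all other two-point contractions vanishing. Both rules follow at once from the anticommutation relations (\ref{psi_psi_dag_anticomm}) together with the vacuum annihilation conditions (\ref{charged_fermi_vac_annih})--(\ref{charged_fermi_dual_vac_annih}): writing $\psi^\dag_m \psi_k = \delta_{mk} - \psi_k \psi^\dag_m$ reduces the first to the fact that $\langle 0 | \psi_k \psi^\dag_m | 0 \rangle$ is nonzero only for $m = k \le -1$, and the second is obtained the same way.

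For the first identity (\ref{A_tilde_polynoms}) I would substitute $\psi^\dag(x^{-1}) = \sum_{m\in\Zb}\psi^\dag_m\, x^{m+1}$ from (\ref{psi_fields}) and the strictly upper-triangular form $\psi_j(\tilde{A}) = \sum_{k=-\infty}^{j} g_{kj}(\tilde{A})\psi_k$ from (\ref{psi_tilde_A_triang}). The first contraction rule collapses the resulting double mode sum onto the diagonal $m=k$ and simultaneously imposes $m \ge 0$, so that only the terms with $0 \le k \le j$ survive, giving $\sum_{k=0}^{j} g_{kj}(\tilde{A})\, x^{k+1}$. Multiplying by $x^{-1}$ and identifying $g_{kj}(\tilde{A}) = P_{kj}(\tilde{A})$ via (\ref{P_kj_coeffs}) reproduces $\sum_{k=0}^{j} P_{kj}(\tilde{A}) x^k = p_j(x|\tilde{A})$; the monic normalization is automatic since $g_{jj}(\tilde{A}) = 1$.

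For the dual identity (\ref{A_tilde_polynoms_dual}) the computation is entirely parallel: I would insert $\psi(y^{-1}) = \sum_{m\in\Zb}\psi_m\, y^{-m}$ and $\psi^*_{-j-1}(\tilde{A}) = \sum_{k=-j-1}^{\infty} g^{-1}_{-j-1,k}(\tilde{A})\psi^\dag_k$ from (\ref{psi_star_tilde_A_triang}), apply the second contraction rule (which forces $m = k \le -1$), reindex the surviving range $-j-1 \le k \le -1$ by $k = -l-1$ with $0 \le l \le j$, and use $P^*_{jl}(\tilde{A}) = g^{-1}_{-j-1,-l-1}(\tilde{A})$ from (\ref{P_star_jk_coeffs}). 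After multiplying by $y^{-1}$ this yields $\sum_{l=0}^{j} P^*_{jl}(\tilde{A}) y^l = p^*_j(y|\tilde{A})$.

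Since the manipulations are routine, the only real care lies in the bookkeeping of index ranges. The essential point is that the vacuum contraction is what truncates the mode sum from below (to $k \ge 0$, resp. $k \le -1$), while the strict upper-triangularity hypothesis (\ref{tilde_A_triang}) is what truncates it from above (to $k \le j$, resp. $k \ge -j-1$) and guarantees the leading coefficient equals $1$. Together these two cut-offs are exactly what makes the expectation value a monic polynomial of degree $j$ rather than a Laurent series, so the main thing to verify is that the prefactors $x^{-1}$ and $y^{-1}$ shift the powers correctly, so that the constant terms $P_{0j}$ and $P^*_{j0}$ are retained.
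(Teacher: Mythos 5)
Your proposal is correct and follows essentially the same route as the paper's proof: substitute the mode expansions (\ref{psi_fields}), (\ref{psi_tilde_A_triang}), (\ref{psi_star_tilde_A_triang}) into the VEVs and evaluate via the vacuum annihilation conditions (\ref{charged_fermi_vac_annih})--(\ref{charged_fermi_dual_vac_annih}). The only cosmetic difference is that the paper packages the elementary contractions as $\langle 0 | \psi^\dag(x^{-1}) \psi_k|0 \rangle = x^{k+1}$ and $\langle 0 | \psi(y^{-1}) \psi^\dag_{-k-1}|0 \rangle = y^{k+1}$ for $k\ge 0$, whereas you work mode-by-mode with $\langle 0|\psi^\dag_m\psi_k|0\rangle$ and $\langle 0|\psi_m\psi^\dag_k|0\rangle$ before summing, which amounts to the same calculation.
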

\begin{proof} Substitute expressions (\ref{psi_tilde_A_triang}) for  $\psi_j(\tilde{A})$ 
and (\ref{psi_star_tilde_A_triang}) for $\psi^*_j(\tilde{A})$, and the series expansions
(\ref{psi_fields}) for $\psi^\dag(x^{-1})$ and $\psi(y^{-1})$  into $\langle 0 | \psi^\dag(x^{-1}) \psi_j(\tilde{A})|0\rangle$,
and  $\langle 0 | \psi(y^{-1}) \psi^\dag_j(\tilde{A})|0\rangle$ and evaluate the terms in the sum using
(\ref{charged_fermi_vac_annih}), (\ref{charged_fermi_dual_vac_annih}), which imply
 \bea
\langle 0 | \psi^\dag(x^{-1}) \psi_j|0 \rangle &\&= x^{j+1},  \ \text{ for } j\ge 0  \\
\langle 0 | \psi(y^{-1}) \psi^\dag_{-j-1}|0 \rangle &\&= y^{j+1},  \ \text{ for } j\ge 0 .
\eea
\end{proof}

It follows \cite{HO2, KvdL1} that the associated KP $\tau$-functions $\pi_\lambda(g(\tilde{A})) (\tb)$  defined in (\ref{pi_lambda_A_def}) are polynomials in the flow variables $\tb$. Besides polynomiality,  they share many further properties 
with Schur functions $s_\lambda(\tb)$, so if conditions (\ref{tilde_A_triang})-(\ref{g_vac_stabil}) are satisfied, we denote these as
\be
s_\lambda(\tb | \tilde{A}) := \pi_\lambda(g(\tilde{A)}) (\tb).
\label{s_lambda_A_def}
\ee
In particular, choosing  $\tilde{A}$ to vanish, so that  $g(\tilde{A})$  is the identity element,
 we recover the Schur functions
\be
 s_\lambda(\tb | 0) = s_\lambda(\tb).
\ee
More generally, if $\tilde{A}$ is upper triangular, we can extend the definition (\ref{s_lambda_A_def}) of polynomial KP $\tau$-functions to an integer lattice of polynomial mKP $\tau$-functions for each $\lambda$
\be
s_{\lambda, n} (\tb | \tilde{A}) := \pi_{\lambda, n}(g(\tilde{A)}) (\tb),  \quad n \in \Zb.
\label{s_lambda_n_A_def}
\ee


 \subsection{Lattice of BKP $\tau$-functions $\kappa_\alpha(h) (\tb_B)$}
 \label{BKP_kappa_alpha}
  As in the KP case, a lattice of BKP $\tau$-functions was introduced in \cite{HO2},  which may also be expressed as 
 fermionic VEV's,  but with the charged fermionic operators $\{\psi_j, \psi^\dag_j\}_{j\in \Zb}$ 
 replaced by either of a pair of sequences  $\{\phi^+_j\}_{j\in \Zb}$, $\{\phi^-_j\}_{j\in \Zb}$  of  mutually  anti-commuting neutral 
 fermionic operators $\{\phi_j^\pm\}_{j\in \Zb}$,  defined, as in  \cite{DJKM1, JM, You},  by 
 \bea
 \phi^+_j &\&:= \tfrac{1}{\sqrt{2} }(\psi_j + (-1)^j \psi^\dag_{-j}), \\
 \phi^-_j &\&:= \tfrac{i}{\sqrt{2}} (\psi_j - (-1)^j \psi^\dag_{-j}).
 \label{phi_pm_def}
 \eea
These satisfy the anti-commutation relations 
  \be
 [\phi^+_j, \phi^+_k]_+ = (-1)^j \delta_{j, -k}, \quad [\phi^-_j, \phi^-_k]_+ = (-1)^j \delta_{j, -k}, \quad [\phi^+_j, \phi^-_k]_+ = 0, \quad j,k \in \Zb  
   \label{phi_pm_anticomm}
 \ee
 and vacuum annihilation conditions
 \be
 \phi^\pm_{-j} |0\rangle =0,  \quad \langle 0| \phi^\pm_{j}  =0, \quad j>0.
 \label{neutral_fermi_vac_annih}
 \ee
Their pairwise expectation values are:
\bea
    \langle 0| \phi^+_j \phi^+_k |0\rangle&\& =\langle 0| \phi^-_j\phi^-_k|0\rangle =
    \begin{cases}
      (-1)^k\delta_{j,-k}& \text{if}\ k>0,\\
      \tfrac12\delta_{j,0}& \text{if}\ k=0,\\
      0& \text{if}\ k<0,
    \end{cases}
    \label{phi_pairing}
     \\
\langle 0|\phi^+_j\phi^-_k|0\rangle &\&= -\langle 0|\phi^-_j \phi^+_k|0\rangle =\tfrac {i}{2} \delta_{j,0}\delta_{k,0}
\label{phi_pm_pairing}.
\eea

The direct sum $\HH + \HH^*$ decomposes  into an orthogonal direct sum with respect to the 
scalar product $Q$ defined in (\ref{Q_def}),
\be
\HH= \HH_{\phi^+} \oplus \HH_{\phi^-}
\ee
of two subspaces
\be
\HH_{\phi^+} = \span\{f^+_j\}_{j\in \Zb}, \quad \HH_{\phi^-} = \span\{f^-_j\}_{j\in \Zb},
\ee
where the bases  $\{f^+_j\}_{j\in \Zb}$ and $\{f^-_j\}_{j\in \Zb}$, defined by 
\be
f^+_j := \tfrac{1}{\sqrt{2}} (e_j + (-1)^j e^{-j}), \quad f^-_j := \tfrac{i}{\sqrt{2} }(e_j - (-1)^j e^{-j}),
\ee
satisfy the orthogonality relations
\be
Q_\pm(f^\pm_j, f^\pm _k) = (-1)^k\delta_{j+k,0},  \quad \forall \ j,k \in \Zb
\ee
with respect to the scalar products
\be
Q_\pm := Q|_{\HH_{\phi^\pm}}
\ee
on the subspaces $\HH_{\phi^\pm}$ obtained  by restriction of $Q$.

The elements  $h^\pm(A) \in \SO(\HH_{\phi^\pm})$ of the corresponding mutually commuting orthogonal subgroups 
$\SO(\HH_{\phi^\pm}) \ss \SO(\HH+\HH^*, Q)$ have fermionic representations 
 \be
\hat{h}^\pm:=\hat{h}^\pm(A) := e^{\hat{A}^\pm} ,
\label{hat_h_exp_A_def}
\ee
that leave invariant the respective subspaces $\FF_{\phi^\pm}\ss\FF$,
where
\be
\hat{A}^\pm := \tfrac{1}{2}\sum_{j,k \in \Zb}  A_{jk} \no{\phi^\pm_j \phi^\pm_k},
\label{hat_A_def}
\ee
with
 $\{A_{jk}\}_{j,k \in \Zb}$ the elements of a doubly infinite  skew symmetric matrix $A$ determining 
 the matrix representation $h$ of $h^\pm$ in the bases $\{f^\pm_j\}_{j\in \Zb}$ by
\be
h(A) = e^{\check{A}}
\label{h_check_exp_A}
\ee
where
\be
\check{A}_{jk} := (-1)^k A_{j, -k}.
\label{check_A_def}
\ee

 The  BKP flows are generated by infinite abelian subgroups 
 $\Gamma^{B\pm}\ss \SO(\HH_{\phi^\pm})$ whose elements are represented fermionically as
 \be
  \hat{\gamma}^{B \pm}(\tb_B) := e^{\sum_{j=1}^\infty t_{2j-1}J^{B\pm}_j}, 
 \ee
where the neutral current components $\{J^{B\pm}_j\}_{j\in \Nb^+}$ are defined as
\be
J^{B+}_j:=\tfrac 12 \sum_{k\in\mathbb{Z}} (-1)^{k+1}\phi^+_k \phi^+_{-k-j},\quad 
J^{B-}_j=\tfrac 12 \sum_{k\in\mathbb{Z}} (-1)^{k+1}\phi^-_k\phi^-_{-k-j}, \quad j\in \Nb^+.
\label{J_j_B_def}
\ee
Of these,  the even ones $J^{B\pm}_{2j}$ vanish, while the odd ones mutually commute:
\be
[J^{B+}_{2j-1}, J^{B+}_{2k-1}]  = 0,\quad [J^{B-}_{2j-1}, J^{B-}_{2k-1}]=0,\quad
[J^{B+}_{2j-1}, J^{B-}_{2k-1}] =0, \quad j,k \in \Nb^+ 
\label{JB_pm_crs}
\ee
and annihilate the vacuum state
\be
J^{B\pm}_{2j-1}|0\rangle =0, \quad \forall \  j \in \Nb^+.
\ee
They are related to the odd charged current components by
\be
 J_{2j-1}=J^{B+}_{2j-1}+J^{B-}_{2j-1}, \quad \forall \, j \in \Nb^+.
 \label{J_J_Bpm}
\ee
Following \cite{DJKM1, DJKM2, JM, You}, there are two types of neutral fermionic 
basis states
\be
 |\alpha^\pm ) := \phi^\pm_{\alpha_1} \cdots \phi^\pm_{\alpha_r} |0\rangle,
 \label{alpha_pm_state}
 \ee
  spanning  two subspaces  $\FF_{\phi^\pm} \ss \FF$  
 \be
 \FF_{\phi^\pm} := \span\{|\alpha^\pm)\}.
 \label{FF_phi_pm_def}\\
 \ee
 
 \begin{remark}
 Note that the subspaces $\FF_{\phi^\pm}\ss \FF$  are not mutually orthogonal.  In fact, their intersection is infinite dimensional, 
 as is their  intersection with each of the fermionic charge sectors $\FF_n$. However they are invariant, respectively, under the
 two different infinite, mutually commuting subgroups $\SO(\HH_{\phi^\pm}) \ss \SO(\HH+\HH^*, Q)$  
 represented fermionically  by the elements $\{\hat{h}^\pm\}$.
  Since the two subgroups $\SO(\HH_{\phi^\pm})$ are isomorphic, as are their abelian subgroups 
 $\Gamma^{B\pm}$ it is sufficient, in studying the resulting BKP $\tau$ functions, to consider only one of them.  
 However, for consistency with earlier work \cite{HO1, HO2}, in which these were related 
 bilinearly to the corresponding   lattices of KP $\tau$-functions (\ref{pi_lambda_A_def}), we retain 
 here the notation for both types of  operators $\{\phi_j^\pm\}_{j\in \Zb}$ and fermionic Fock spaces  
 $\FF_{\phi^\pm} \ss \FF$,   although there is no difference in the resulting BKP $\tau$-functions or correlators constructed from them.
 \end{remark}
  
The lattice of BKP $\tau$-functions $\{\kappa_\alpha(h)(\tb_B )\}$  is defined  \cite{HO2} as :
\be
\kappa_\alpha(h)(\tb_B ) := \langle 0 | \hat{\gamma}^{B\pm}(\tb_B)  \hat{h}^\pm | \alpha^\pm ) ,
  \label{kappa_alpha_A_def}
  \ee
where
  \be
 \tb_B= (t_1, t_3, \dots)
 \ee
 denote the  BKP flow variables,  which may be restricted, as in (\ref{power_sum_tb}), to evaluations
 on normalized power sums in an auxiliary (finite or infinite) set of bosonic variables 
 \hbox{$\xb=(x_1, x_2, \dots , x_n)$}
  \be
\tb_B = [\xb]_B,\quad 
  [\xb]_B := ([\xb]_1, [\xb]_3, \dots, [\xb]_{2j-1}, \dots),
  \label{power_sum_tb_B}
  \ee

We also introduce fermionic field operators
\be
\phi^\pm(z)  :=\sum_{j\in \Zb}\phi^\pm_j z^j, 
\label{phi_pm_fields}
\ee
which are related to those defined in (\ref{psi_fields}) by
\bea
\psi(z)=\tfrac{1}{\sqrt{2}}\left(\phi^+(z)-i\phi^-(z)  \right), &\&\quad 
\psi^\dag (\tfrac{1}{z})=\tfrac{1}{\sqrt{2}z}\left(\phi^+(-z)+ i\phi^-(-z)  \right)
\label{psi_z_phi_pm_z}
\\
\psi^\dag(-z)\psi(z) &\&=\frac{i}{z}\phi^{+}(z)\phi^{-}(z),
\label{reduction-left}
\eea
and  the ``dressed'' operators
\be
\phi^\pm_j(A):=\hat{h}^\pm(A)\phi_j^\pm (\hat{h}^\pm)^{-1}(A) = \sum_{k\in \Zb} h_{kj}(A) \phi^\pm_k 
\label{phi_pm_A_def}
\ee

In the special case where the matrix $A$ satisfies the antidiagonal triangularity condition
\be
A_{jk} =0  \quad \text{if } j+k \ge 0,
\label{A_antidiag_triang}
\ee
and $\check{A}$ the strictly upper triangular one
\be
\check{A}_{jk} =0  \quad \text{if } j \ge k ,
\label{A_check_triang}
\ee
the $\grs\gro(\HH_{\phi^\pm})$ algebra elements  $\hat{A}^\pm$ annihilate the vacuum state
\be
 \hat{A}^\pm|0 \rangle = 0,
\label{A_triang_vac_annihil}
\ee
and the group elements  $\hat{h}^\pm(A)$  stabilize it
\be
 \hat{h}^\pm(A)|0 \rangle = |0 \rangle.
\label{h_vac_stabil}
\ee
 The matrix $h(A)$ becomes upper triangular, with $1$'s on the diagonal and (\ref{phi_pm_A_def}) therefore reduces to
\bea
\phi^\pm_j(A)&\&:=\hat{h}^\pm(A)\phi_j^+ (\hat{h}^\pm)^{-1}(A) = \sum_{k=-\infty}^j h_{kj}(A) \phi^\pm_k, 
\label{phi_pm_A_triang}
\\
 h_{jj}(A)&\&=1, \ \forall \   j \in\Zb.
\eea

Defining the upper triangular $\Zb \times \Zb$ matrix $P(A)$ with elements
\be
(P(A))_{jk} := \big(h(A)\big)_{jk} -\tfrac{1}{2}\delta_{k0} \big(h(A)\big)_{0j},
\label{P_A_ij-def}
\ee
we again define a sequence of monic polynomials $\{ p_j(x|A) \}_{j\in \Nb^+}$ ,
\be
  p_j(x|A)  = \sum_{k=0}^j P_{kj}(A) x^k, \quad P_{jj} =1, \ \text{ if } j \neq 0, \ P_{00} = \tfrac{1}{2},
  \label{p_j_tilde_A_def}
\ee
and
\be
 p_0(x|A) =\tfrac{1}{2},
\ee
such that  the upper triangular matrix of coefficients $\{P_{jk}(A)\}_{j,k,\in \Nb}$ is 
the $\Nb \times \Nb$ block $P(A)$. We then have
\begin{lemma}
\label{VEV_p_j_A}
The polynomials $\{p_j(x|A) \}_{j\in \Nb}$  have the following VEV representations:
\bea
p_j(x | A) &\&= \langle 0 | \phi^\pm(-x^{-1}) \phi^\pm_j(A)|0\rangle,  \quad \forall \, j  \in \Nb.
\label{A_polynoms} 
\eea
\end{lemma}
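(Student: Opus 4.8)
The plan is to mirror the proof of Lemma \ref{VEV_p_j} for the charged-fermion case, substituting the triangular expansion (\ref{phi_pm_A_triang}) for the dressed operator $\phi^\pm_j(A)$ and the mode expansion (\ref{phi_pm_fields}) for the field $\phi^\pm(-x^{-1})$ into the vacuum expectation value, then evaluating term by term with the neutral-fermion pairings (\ref{phi_pairing}). Concretely, writing $\phi^\pm(-x^{-1}) = \sum_{m\in \Zb}(-x^{-1})^m \phi^\pm_m$ and $\phi^\pm_j(A) = \sum_{k=-\infty}^j h_{kj}(A)\phi^\pm_k$, bilinearity reduces the statement to evaluating the elementary building blocks $\langle 0 | \phi^\pm(-x^{-1})\phi^\pm_k |0\rangle$ for each $k\le j$.

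First I would establish the building-block identity
\be
\langle 0 | \phi^\pm(-x^{-1}) \phi^\pm_k |0\rangle =
\begin{cases}
x^k & \text{if } k > 0, \\
\tfrac12 & \text{if } k = 0, \\
0 & \text{if } k < 0,
\end{cases}
\ee
which follows directly from (\ref{phi_pairing}): for $k>0$ only the mode $m=-k$ survives, and the sign bookkeeping $(-x^{-1})^{-k} = (-1)^{-k}x^{k}$ cancels against the pairing factor $(-1)^k$ to give $x^k$; for $k=0$ only the $m=0$ mode survives, leaving the factor $\tfrac12$ from the zero-mode pairing; and for $k<0$ the pairing vanishes. Summing these against the triangular coefficients $h_{kj}(A)$ then gives
\be
\langle 0 | \phi^\pm(-x^{-1}) \phi^\pm_j(A) |0\rangle = \tfrac12 h_{0j}(A) + \sum_{k=1}^j h_{kj}(A)\, x^k ,
\ee
where the $k<0$ terms have dropped out. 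It remains only to match this against $p_j(x|A)$ as defined in (\ref{p_j_tilde_A_def}). Using the shifted coefficient matrix (\ref{P_A_ij-def}), $P_{kj}(A) = h_{kj}(A) - \tfrac12 \delta_{k0} h_{0j}(A)$, one finds $\sum_{k=0}^j P_{kj}(A) x^k = h_{0j}(A) + \sum_{k=1}^j h_{kj}(A) x^k - \tfrac12 h_{0j}(A)$, which is exactly the expression above; the case $j=0$ reduces to $\tfrac12 h_{00}(A) = \tfrac12$, consistent with $p_0(x|A)=\tfrac12$.

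The only real subtlety, and the step I expect to be the crux, is the careful treatment of the zero mode $k=0$. In the neutral-fermion setting the self-pairing $\langle 0|\phi^\pm_0\phi^\pm_0|0\rangle = \tfrac12$ carries a factor $\tfrac12$ rather than the $1$ appearing for nonzero modes, and it is precisely this factor that forces the $-\tfrac12\delta_{j0}$ correction built into the definition (\ref{P_A_ij-def}) of $P(A)$, together with the normalizations $P_{00}=\tfrac12$ and $p_0(x|A)=\tfrac12$. Tracking this zero-mode contribution correctly is what distinguishes the argument from its charged-fermion counterpart in Lemma \ref{VEV_p_j}; once it is handled, the remainder is the same bilinear Wick-type evaluation.
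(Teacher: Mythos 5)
Your proposal is correct and follows essentially the same route as the paper's own proof: substituting the triangular expansion (\ref{phi_pm_A_triang}) and the field expansion (\ref{phi_pm_fields}) into the VEV and evaluating mode by mode with the pairing (\ref{phi_pairing}), which yields exactly the building-block values $x^k$, $\tfrac12$, $0$ for $k>0$, $k=0$, $k<0$. Your explicit tracking of the zero-mode factor $\tfrac12$ against the shift $P_{jk}(A) = h_{jk}(A) - \tfrac12\delta_{j0}h_{0k}(A)$ in (\ref{P_A_ij-def}) is a welcome elaboration of a step the paper leaves implicit, but it is the same argument.
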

\begin{proof} This follows from
substituting expression (\ref{phi_pm_A_triang}) for  $\phi^\pm_j(A)$ 
and the series expansion
(\ref{phi_pm_fields}) for $\phi^\pm(-x^{-1})$   into $\langle 0 | \phi^\pm(x^{-1}) \phi^\pm_j(A)|0\rangle$,
 and evaluating the terms in the sum using
(\ref{phi_pairing}), which implies
 \be
\langle 0 | \phi^\pm(-x^{-1}) \phi^\pm_j|0 \rangle= \begin{cases} x^j  \,   \text{ for } j> 0 , \cr
  \tfrac{1}{2} \,   \text{ for } j= 0 ,\cr
  0  \,   \text{ for } j<0  .
  \end{cases}
\ee
\end{proof}
It follows \cite{HO2, KvdL2} that the associated BKP $\tau$-functions $ \kappa_\alpha(h(A))(\tb_B )$
are polynomials in the flow variables  $\tb_B$ which, besides polynomiality,  share
many properties with the (scaled) Schur $Q$-functions, which correspond to choosing $A=0$.
Therefore, if conditions 
(\ref{A_antidiag_triang})-(\ref{h_vac_stabil}) are satisfied, we denote these as
\be
\QQ_\alpha([\xb]_B | A):= \kappa_\alpha(h(A) ([\xb]_B),
\label{Q_alpha_A_def}
\ee
and similarly define $Q_\alpha(\xb|A)$ by
\be
Q_\alpha(\xb|A) =  2^{\tfrac{r}{2}} \QQ_\alpha(2[\xb]_B| A) . 
\label{Q_A_QQ_A}
\ee

In particular, choosing $A$ to vanish, so $h(A)$ is the identity element, 
 we recover the Schur Q-functions
\be
Q_\alpha(\xb)  =   2^{\tfrac{r}{2}} \QQ_\alpha( [2\xb]_B | 0).
\ee

\section{Bialternant formula for polynomial KP tau functions}
\label{generalized_bialtern}

Setting $\tb =[\xb]$, as in (\ref{power_sum_tb}), with a finite number $n$ of variables $\xb=(x_1, \dots, x_n)$,
 and assuming the length $\ell(\lambda)$ of the partition $\lambda$ satisfies  $\ell(\lambda) \le n$,  
 Jacobi's bialternant formula for Schur functions \cite{Mac1} is
\be
s_\lambda([\xb]) = \frac{\det\left(x_j ^{\lambda_k - k + n}\right)_{1\le \, j,k \,  \le n}} {\Delta(\xb)},
\ee
where we set $\lambda_k =0$ for $k > \ell(\lambda)$,  and
\be
\Delta(\xb) = \prod_{1\le j < k \le n} (x_j - x_k)
\ee
is the Vandermonde determinant.
 
This can be generalized  by replacing the monomials $\{x^j\}_{j\in \Nb}$ by an arbitrary sequence
of monic polynomials, as in \cite{HL}
\be
p_j(x|\tilde{A}):= \sum_{k=0}^j P_{kj}(\tilde{A}) x^k, \quad P_{jj}(\tilde{A}) = 1, \quad j\in \Nb,
\ee
where the upper triangular matrix of coefficients $\{P_{jk}(A)\}_{j,k \in \Nb}$ is
the $\Nb \times \Nb$ block of the $\Zb \times \Zb$ upper triangular matrix of coefficients 
$P(\tilde{A})$ obtained by exponentiating the strictly upper triangular matrix $\tilde{A}$ as in (\ref{g_tilde_A})
\be
P(\tilde{A}) = g(\tilde{A}) = e^{\tilde{A}}.
\ee
Defining
\be
\tilde{s}_{\lambda, n}([\xb] | \tilde{A}) :=  \frac{\det\left(p_{\lambda_k - k + n} (x_j|\tilde{A})\right)_{1\le\, j,k \, \le n}} {\Delta(\xb)},
\label{s_lambda_tilde_A}
\ee
it  follows that this coincides with the KP $\tau$-function $s_{\lambda, n}(\xb | \tilde{A})$
defined in  (\ref{s_lambda_n_A_def}).
\begin{proposition}
\label{s_lambda_tilde_n}
\be
\tilde{s}_{\lambda, n}([\xb] | \tilde{A}) = s_{\lambda, n}([\xb] | \tilde{A}).
\label{s_lambda_tilde_n_tilde_s}
\ee
\end{proposition}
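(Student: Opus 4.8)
The plan is to compute the fermionic VEV $s_{\lambda,n}([\xb]\,|\,\tilde{A})=\langle n|\hat{\gamma}_+([\xb])\,\hat{g}(\tilde{A})\,|\lambda;n\rangle$ directly and to show that Wick's theorem turns it into the alternant \eqref{s_lambda_tilde_A}. First I would strip off $\hat{g}(\tilde{A})$: since $\tilde{A}$ is strictly upper triangular, $\hat{g}(\tilde{A})$ stabilizes the vacuum and $\hat{g}\psi_j=\psi_j(\tilde{A})\hat{g}$. Writing \eqref{lambda_n_basis} in fully creation-operator form, $|\lambda;n\rangle=\psi_{l_1(n)}\cdots\psi_{l_n(n)}|0\rangle$, where the product now runs over all $n$ positions $l_k(n)=\lambda_k-k+n\ge 0$ (padding $\lambda$ by zeros and absorbing $|n-\ell(\lambda)\rangle$ into the remaining creators), so that
\[
s_{\lambda,n}([\xb]\,|\,\tilde{A})=\langle n|\hat{\gamma}_+([\xb])\,\psi_{l_1(n)}(\tilde{A})\cdots\psi_{l_n(n)}(\tilde{A})\,|0\rangle .
\]

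The crucial step is to resolve the bra $\langle n|\hat{\gamma}_+([\xb])$ into fermion fields. Starting from $\langle n|=\langle 0|\psi^\dag_0\psi^\dag_1\cdots\psi^\dag_{n-1}$, the vacuum stability $\langle 0|\hat{\gamma}_+=\langle 0|$, and the vertex-operator relation $[J_j,\psi^\dag(w)]=-w^j\psi^\dag(w)$ (which gives $\hat{\gamma}_+([\xb])\,\psi^\dag(w)\,\hat{\gamma}_+([\xb])^{-1}=\prod_a(1-x_aw)\,\psi^\dag(w)$), I would prove the identity
\[
\langle n|\hat{\gamma}_+([\xb])=\frac{(-1)^{\binom{n}{2}}\prod_{a=1}^n x_a^{-1}}{\Delta(\xb)}\,\langle 0|\,\psi^\dag(x_1^{-1})\cdots\psi^\dag(x_n^{-1}).
\]
The base case $\langle 1|\hat{\gamma}_+([x])=x^{-1}\langle 0|\psi^\dag(x^{-1})$ is a one-line consequence of these relations, and the general case follows by iteration, the total antisymmetry of the fields $\psi^\dag(x_a^{-1})$ forcing the Vandermonde denominator.

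With the bra so resolved, the surviving object $\langle 0|\psi^\dag(x_1^{-1})\cdots\psi^\dag(x_n^{-1})\,\psi_{l_1(n)}(\tilde{A})\cdots\psi_{l_n(n)}(\tilde{A})|0\rangle$ is a $2n$-fermion VEV in which the only non-vanishing contractions are the $\langle 0|\psi^\dag\psi|0\rangle$ pairings. Wick's theorem therefore collapses it to $(-1)^{\binom{n}{2}}\det\big(\langle 0|\psi^\dag(x_j^{-1})\psi_{l_k(n)}(\tilde{A})|0\rangle\big)_{1\le j,k\le n}$, the sign coming from the reversal that brings the product into Slater-determinant order. By Lemma~\ref{VEV_p_j} each entry equals $x_j\,p_{l_k(n)}(x_j|\tilde{A})$; extracting $\prod_j x_j$ from the rows and combining with the prefactor above, the two factors $(-1)^{\binom{n}{2}}$ and the powers of $x_a$ cancel exactly, leaving $\det\big(p_{l_k(n)}(x_j|\tilde{A})\big)/\Delta(\xb)=\tilde{s}_{\lambda,n}([\xb]\,|\,\tilde{A})$, as claimed.

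The main obstacle is the bra-resolution identity of the second step. Its $n=1$ instance is trivial, but for general $n$ a naive attempt to commute $\hat{\gamma}_+([\xb])$ past the fields evaluated at the resonant points $w_a=x_a^{-1}$ produces the vanishing factors $\prod_a(1-x_ax_a^{-1})$, so the identity cannot be read off term by term. One must instead combine the fermionic antisymmetry with the charge grading, and it is precisely this antisymmetry that trades the manifestly symmetric bosonic flow for the antisymmetric alternant over the Vandermonde. Once that identity is secured, the Wick reduction and the appeal to Lemma~\ref{VEV_p_j} are routine bookkeeping of signs and monomial factors.
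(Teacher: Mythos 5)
Your proposal is correct and follows essentially the same route as the paper's own proof: write $|\lambda;n\rangle$ in pure creation-operator form, use vacuum stabilization to dress the $\psi_{l_k(n)}$'s into $\psi_{l_k(n)}(\tilde{A})$, resolve $\langle n|\hat{\gamma}_+([\xb])$ into a product of fields $\psi^\dag(x_a^{-1})$, and then apply Wick's theorem together with Lemma~\ref{VEV_p_j}. The only difference is that your key bra-resolution identity is exactly the bosonization formula \eqref{n_bosoniz_psi} (up to the reversed ordering, whose sign $(-1)^{\binom{n}{2}}$ you track correctly), which the paper simply cites from the literature \cite{JM,DJKM1} rather than re-deriving via the vertex-operator relation as you sketch.
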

Although this result was proved in  \cite{HL}, and special cases have long been  studied \cite{BL1, BL2, Ol, SV}, 
we provide here,  for completeness,  a self-contained proof.

\begin{proof}
Recall the following formula, related to the bosonization map  \cite{JM, DJKM1}:
\be
\langle 0 |  \psi^\dag(x_n^{-1}) \cdots \psi^\dag(x_1^{-1})= \Big(\prod_{j=1}^n x_j\Big) \Delta(\xb) \langle n | \hat{\gamma}_+(\sum_{a=1}^n [x_a]) .
\label{n_bosoniz_psi}
\ee
Eq.~(\ref{s_lambda_tilde_n_tilde_s}) is obtained by substituting \ref{n_bosoniz_psi}) in (\ref{pi_lambda_n_A_def}) 
and (\ref{s_lambda_n_A_def}), and choosing $\tb = [\xb]$. 
For upper triangular $\tilde{A}$ and partition $\lambda$,  we have,  from the expression (\ref{lambda_n_basis}) for
the basis element $|\lambda; n\rangle$, the definition  eq.~(\ref{psi_tilde_A_def}) of $\psi_j(\tilde{A})$
 and the fact that $\hat{g}(\tilde{A})$ stabilizes the vacuum (\ref{g_vac_stabil}),
\bea
s_{\lambda, n}([\xb])|\tilde{A}) &\&=\langle n | \hat{\gamma}_+(\sum_{a=1}^n [x_a]) \hat{g}(\tilde{A}) | \lambda; n \rangle \cr
 &\& =   \frac{\langle 0 | \psi^\dag(x_n^{-1}) \cdots \psi^\dag(x_1^{-1}) \psi_{\lambda_1-1+n}(\tilde{A})
 \cdots \psi_{\lambda_n}(\tilde{A}) |0 \rangle}{\Big(\prod_{j=1}^n x_j\Big)\Delta(\xb)}  \cr
&\& = \frac{\det \Big(\langle 0 |  \psi^\dag(x_j^{-1}) \psi_{\lambda_k-k +n}(\tilde{A}) |0\rangle \Big)_{1\le j,k \le n}}{\left(\prod_{j=1}^n x_j\right)\Delta(\xb)}  \cr
&\& = \frac{\det\left(p_{\lambda_k - k + n} (x_j|\tilde{A})\right)_{1\le\, j,k \, \le n}}{\Delta(\xb)} = \tilde{s}_{\lambda, n}([\xb] | \tilde{A}),
\eea
where the third line follows from Wick's theorem (Appendix \ref{wick_app}, eq.~(\ref{wick_det})) and the fourth from (\ref{A_tilde_polynoms}).
\end{proof}

Besides the fact that the  $s_{\lambda, n}([\xb])|\tilde{A})$'s are polynomial KP $\tau$-functions expressible 
via the bialternant formula (\ref{s_lambda_tilde_A}),  they also share with the Schur functions $s_{\lambda}([\xb]))$
the property that, for $\lambda$ with Frobenius indices $(\alpha | \beta)$ as in (\ref{alpha_beta}),
 they satisfy the Giambelli  identity \cite{Mac1},  expressing  them as determinants of the matrices
 whose elements are the functions $s_{(\alpha_i|\beta_j)}([\xb]))$ corresponding to hook partitions 
  for all pairs $(\alpha_i, \beta_j)$.
 \begin{proposition}[Giambelli identity]
 \be
s_{\lambda, n}([\xb])|\tilde{A}) = \det \left(s_{(\alpha_i| \beta_j), n}([\xb])|\tilde{A}) \right)_{1\le i,j \le r}
\label{s_A_giambelli}
\ee
\end{proposition}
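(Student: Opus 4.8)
The plan is to prove \eqref{s_A_giambelli} by the same fermionic mechanism used for the bialternant formula in Proposition~\ref{s_lambda_tilde_n}, now keeping the Frobenius factorization of the basis state intact rather than expanding it through the bosonization map. The starting point is that the Frobenius forms \eqref{lambda_0_bsasis_state} and \eqref{psi_alpha_beta_vac_n} already present $|\lambda;n\rangle$ as a product of $r$ ``hook'' creation/annihilation pairs, one for each Frobenius index $(\alpha_i|\beta_i)$ of $\lambda$ as in \eqref{alpha_beta}. First I would write, using \eqref{pi_lambda_n_A_def} and the identification \eqref{s_lambda_n_A_def},
\[
s_{\lambda,n}([\xb]\,|\,\tilde A)
=(-1)^{\sum_{j=1}^r\beta_j}\,
\langle n|\,\hat O\,
\psi_{\alpha_1+n}\psi^\dag_{-\beta_1+n-1}\cdots\psi_{\alpha_r+n}\psi^\dag_{-\beta_r+n-1}\,|n\rangle,
\]
where $\hat O:=\hat\gamma_+([\xb])\,\hat g(\tilde A)$. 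Because $\tilde A$ is strictly upper triangular, $\hat g(\tilde A)$ stabilizes every reference state (the sector analogue of \eqref{g_vac_stabil}), and since the currents $J_j$ annihilate $|n\rangle$ one has $\hat O|n\rangle=|n\rangle$.

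Next I would use $\hat O|n\rangle=|n\rangle$ to rewrite the expectation value as a bare Fock VEV of dressed operators,
\[
s_{\lambda,n}([\xb]\,|\,\tilde A)
=(-1)^{\sum_{j=1}^r\beta_j}\,
\langle n|\,\psi_{\alpha_1+n}(\tilde A)\,\psi^\dag_{-\beta_1+n-1}(\tilde A)\cdots
\psi_{\alpha_r+n}(\tilde A)\,\psi^\dag_{-\beta_r+n-1}(\tilde A)\,|n\rangle,
\]
with the dressed operators of \eqref{psi_tilde_A_def} and \eqref{psi_star_tilde_A_def}. The key structural point is that conjugation by the charge-preserving element $\hat O$ maps each $\psi$ to a combination of $\psi$'s and each $\psi^\dag$ to a combination of $\psi^\dag$'s, with no mixing of the two types. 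Consequently, in the determinantal Wick expansion (Appendix~\ref{wick_app}, eq.~\eqref{wick_det}) every contraction of two creation-type operators, and every contraction of two annihilation-type operators, vanishes by charge conservation, so only the $r^2$ mixed contractions survive. Wick's theorem then collapses the full VEV to a single determinant,
\[
s_{\lambda,n}([\xb]\,|\,\tilde A)
=(-1)^{\sum_{j=1}^r\beta_j}\,
\det\Big(\langle n|\,\hat O\,\psi_{\alpha_i+n}\psi^\dag_{-\beta_j+n-1}\,|n\rangle\Big)_{1\le i,j\le r},
\]
the fermionic signs produced by the pairing being exactly those of the determinant expansion.

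Finally I would identify each matrix entry with a hook $\tau$-function. The hook partition $(\alpha_i|\beta_j)$ has a single Frobenius pair, so the Frobenius form gives $|(\alpha_i|\beta_j);n\rangle=(-1)^{\beta_j}\psi_{\alpha_i+n}\psi^\dag_{-\beta_j+n-1}|n\rangle$, whence
\[
s_{(\alpha_i|\beta_j),n}([\xb]\,|\,\tilde A)
=(-1)^{\beta_j}\,\langle n|\,\hat O\,\psi_{\alpha_i+n}\psi^\dag_{-\beta_j+n-1}\,|n\rangle.
\]
Extracting the factor $(-1)^{\beta_j}$ from each column of the determinant contributes exactly $\prod_{j=1}^r(-1)^{\beta_j}=(-1)^{\sum_j\beta_j}$, which cancels the global prefactor and leaves $\det\big(s_{(\alpha_i|\beta_j),n}([\xb]\,|\,\tilde A)\big)_{1\le i,j\le r}$, establishing \eqref{s_A_giambelli}. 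The step demanding the most care is the sign bookkeeping: one must check that discarding the vanishing same-type contractions from the interleaved Frobenius ordering $\psi_{\alpha_1}\psi^\dag_{-\beta_1-1}\cdots$ yields precisely the determinant sign with no stray permutation factor, and that the per-column normalization $(-1)^{\beta_j}$ matches the convention built into the hook functions $s_{(\alpha_i|\beta_j),n}$. This is the only place where the argument is more than a transcription of the proof of Proposition~\ref{s_lambda_tilde_n}.
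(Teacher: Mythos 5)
Your proposal is correct and is essentially the paper's own proof: the paper likewise writes $s_{\lambda,n}$ as $(-1)^{\sum_j\beta_j}\langle n|\hat\gamma_+([\xb])\prod_j\psi_{\alpha_j+n}(\tilde A)\psi^\dag_{-\beta_j+n-1}(\tilde A)|n\rangle$, inserts $\hat\gamma_+\hat\gamma_+^{-1}$ between the dressed Frobenius pairs (your single conjugation by $\hat O=\hat\gamma_+\hat g$ is the same device), applies the determinantal form of Wick's theorem (eq.~(\ref{wick_det})), and absorbs the prefactor as a $(-1)^{\beta_j}$ per column to identify the entries with the hook functions $s_{(\alpha_i|\beta_j),n}([\xb]|\tilde A)$. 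The only point worth noting is that both you and the paper implicitly use the sector analogues $\hat g(\tilde A)|n\rangle=|n\rangle$ and $J_j|n\rangle=0$ for general $n$, which hold for strictly upper triangular $\tilde A$ exactly as in the $n=0$ case stated in (\ref{g_vac_stabil}).
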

\begin{proof} 
From (\ref{pi_lambda_A_def}), (\ref{s_lambda_n_A_def}) and \ref{psi_alpha_beta_vac_n}), we have
\bea
s_{\lambda, n}([\xb])|\tilde{A})  &\&=(-1)^{\sum_{j=1}^r \beta_j} \langle n | \hat{\gamma}_+([\xb]) 
\prod_{j=1}^r \psi_{\alpha+j+n}(\tilde{A}) \psi^\dag_{-\beta_j +n -1}(\tilde{A}) |n \rangle \cr
&\&= (-1)^{\sum_{j=1}^r \beta_j} \langle n | 
\prod_{j=1}^r \hat{\gamma}_+([\xb]) \psi_{\alpha+j+n}(\tilde{A})\hat{\gamma}^{-1}_+([\xb]) 
 \hat{\gamma}_+([\xb]) \psi^\dag_{-\beta_j +n -1}(\tilde{A}) \hat{\gamma}^{-1}_+([\xb])  |n \rangle \cr
 &\&=  \det\left((-1)^{\beta_j} \langle n | xt
 \hat{\gamma}_+([\xb]) \psi_{\alpha+i+n}(\tilde{A})\psi^\dag_{-\beta_j +n -1}(\tilde{A}) |n \rangle \right)_{1\le i,j \le r} \cr
 &\&=  \det \left(s_{(\alpha_i| \beta_j), n}([\xb])|\tilde{A}) \right)_{1\le i,j \le r}
\eea
where the fact that $\hat{\gamma}^{-1}_+([\xb])$ stabilizes the vacuum has been used
in the second line and Wick's theorem (see Appendix \ref{wick_app})  in the third.
\end{proof}

\section{Generalized Nimmo formula for polynomial BKP $\tau$-functions}
\label{BKP_gen_nimmo}

Nimmo's formula \cite{Nim}  similarly expresses Schur $Q$-functions $Q_{\alpha}(\xb)$ associated to 
a strict partition $\alpha = (\alpha_1,\dots, \alpha_{2m})$ of even cardinality (possibly including 
a vanishing part $\alpha_{2m} =0$), as the ratio of two Pfaffians: 
\be
Q_{\alpha}(\xb) = 2^{2m}\frac{\Pf(M_{\alpha}(\xb))}{\Pf(M(\xb))},
\label{nimmo_Q_alpha}
\ee
where $\xb = (x_1, \dots, x_{2n})$ consists of an even number $2n$ of elements,
 $M(\xb)$ is the $2n \times 2n$ skew symmetric matrix
\be
M_{ab}(\xb) :=\frac{x_a-x_b}{x_a+x_b} \qquad 1\leq a, b \leq 2n
\label{M_ab_def}
\ee
and $M_{\alpha}(\xb)$ is the $2(n+m) \times 2(n+m)$ block skew symmetric matrix
\be
M_{\alpha}(\xb) := \begin{pmatrix}M(\xb) & V_{\alpha}(\xb)\\-V_{\alpha}(\xb)^{T} & 0 \end{pmatrix}, 
\ee
with
\be
\left(V_{\alpha}(\xb)\right)_{aj} := (x_a)^{\alpha_j}, \quad  a=1,\dots, 2n, \ j=1,\dots,2m.
\label{V_alpha_aj_def}
\ee
If the  number of elements is odd $(x_1, \dots, x_{2n-1})$, we just set $x_{2n}=0$ in \ref{nimmo_Q_alpha})- \ref{V_alpha_aj_def}).

Now assume that the infinite matrix $A$ appearing in eq.~(\ref{hat_A_def}) satisfies the antidiagonal
triangular conditions (\ref{A_antidiag_triang}) or, equivalently, that $\check{A}$
satisfies the strict upper triangular conditions (\ref{A_check_triang}),
so that  $\hat{A}^\pm$ annihilates the vacuum (\ref{A_triang_vac_annihil})
and $\hat{h}^\pm(A)$ stabilizes it (\ref{h_vac_stabil}).
It follows that the BKP $\tau$-function 
$Q_\alpha([\xb]_B| A)$ is a polynomial of degree $\le 2m$ (not necessarily homogeneous)
that  is expressible via a generalized Nimmo formula.
Define the $2m \times 2m$ skew symmetric matrix matrix $H_\alpha(A)$ with elements
\bea
H_\alpha(A)_{jk} &\&:=
\begin{cases} 
\langle 0| \phi^\pm_{\alpha_j}(A) \phi^\pm_{\alpha_k}(A) |0 \rangle, \quad 1\le j<k \le 2m,  \cr
0 \ \text{ if } j=k.
\end{cases} \cr
&\& \cr
&\& = - H_\alpha(A)_{kj} \\
&\& = \sum_{i=1}^{\alpha_k}(-1)^i P_{-i, \alpha_j}(A)P_{i, \alpha_k}(A) +2 P_{0, \alpha_j}P_{0,\alpha_k},  \quad 1\le j<k \le 2m.
\label{H_alpha_A_def}
\eea
\begin{proposition}[Generalized Nimmo formula]
\label{general_nimmo_prop}
Assuming condition (\ref{A_antidiag_triang}) to hold, we have
\be
Q_{\alpha}(\xb| A) = 2^{2m}\frac{\Pf(M^H_{\alpha}(\xb|A))}{\Pf(M(\xb))},
\label{gen_nimmo_A}
\ee
where
\be
M^H_{\alpha}(\xb | A) = \begin{pmatrix}M(\xb) & V_{\alpha}(\xb| A)\\-V_{\alpha}(\xb|A)^{T} & 2 H_\alpha(A) \end{pmatrix}, 
\label{M_alpha_def}
\ee
with 
\be
\left(V_\alpha(\xb|A)\right)_{aj}= p_{\alpha_j}(x_a|A), \quad  a=1,\dots, 2n, \quad j=1,\dots, 2m.
\label{V_alpha_def}
\ee
\end{proposition}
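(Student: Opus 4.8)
The plan is to mirror the proof of the bialternant formula (Proposition \ref{s_lambda_tilde_n}), replacing the charged-fermion determinantal Wick expansion by its neutral-fermion Pfaffian analogue. The three blocks $M(\xb)$, $V_\alpha(\xb|A)$ and $H_\alpha(A)$ of $M_\alpha(\xb|A)$ in (\ref{M_alpha_def}) should emerge, respectively, from the three types of pairwise contractions among the two families of neutral-fermion operators that appear, while the denominator $\Pf(M(\xb))$ should arise as the ``bosonization normalization'' factor, exactly as the Vandermonde $\Delta(\xb)$ does in the KP case.

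First I would use the hypothesis (\ref{A_antidiag_triang}), the resulting vacuum stabilization (\ref{h_vac_stabil}), and the conjugation rule $\hat{h}^\pm(A)\phi^\pm_{\alpha_j}(\hat{h}^\pm(A))^{-1} = \phi^\pm_{\alpha_j}(A)$ read from (\ref{phi_pm_A_triang}), to push $\hat{h}^\pm(A)$ through the basis state (\ref{alpha_pm_state}) and let it fix $|0\rangle$, giving
\be
\QQ_\alpha([\xb]_B | A) = \langle 0 | \hat{\gamma}^{B\pm}([\xb]_B)\, \phi^\pm_{\alpha_1}(A)\cdots \phi^\pm_{\alpha_{2m}}(A) | 0 \rangle .
\ee
The key remaining ingredient is a neutral-fermion bosonization identity analogous to (\ref{n_bosoniz_psi}), of the schematic form
\be
\langle 0 | \hat{\gamma}^{B\pm}([\xb]_B) = c_n\, \Pf(M(\xb))^{-1}\, \langle 0 | \phi^\pm(-x_1^{-1})\cdots \phi^\pm(-x_{2n}^{-1}),
\ee
with $c_n$ an explicit constant. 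This is established by computing the free two-point function $\langle 0 | \phi^\pm(-x_a^{-1})\phi^\pm(-x_b^{-1})|0\rangle$ from the pairings (\ref{phi_pairing}), recognizing it as a scalar multiple of $M_{ab}(\xb)$ (so that the Schur Pfaffian $\Pf(M(\xb)) = \prod_{a<b}\tfrac{x_a-x_b}{x_a+x_b}$ appears as the full free vacuum expectation value), and matching the odd power sums generated by the fields to $[\xb]_B$.

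Substituting this identity expresses $\QQ_\alpha([\xb]_B|A)$ as $\Pf(M(\xb))^{-1}$ times a single vacuum expectation value of $2n+2m$ neutral fermions. Applying the Pfaffian form of Wick's theorem (Appendix \ref{wick_app}) then writes this VEV as the Pfaffian of the skew matrix of all pairwise contractions, which splits into exactly the block pattern of (\ref{M_alpha_def}): the field--field contractions reproduce $M(\xb)$ (up to the constant absorbed into $c_n$); the mixed contractions $\langle 0 | \phi^\pm(-x_a^{-1})\phi^\pm_{\alpha_j}(A)|0\rangle$ equal $p_{\alpha_j}(x_a|A) = (V_\alpha(\xb|A))_{aj}$ by Lemma \ref{VEV_p_j_A} and (\ref{V_alpha_def}); and the dressed--dressed contractions $\langle 0 | \phi^\pm_{\alpha_j}(A)\phi^\pm_{\alpha_k}(A)|0\rangle$ are $H_\alpha(A)_{jk}$ by the definition (\ref{H_alpha_A_def}). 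Collecting the constant $c_n$ together with the $Q$-versus-$\QQ$ scaling (\ref{Q_A_QQ_A}) should produce the claimed ratio.

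The main obstacle will be the precise bookkeeping of scalar factors in the bosonization step: the pairing (\ref{phi_pairing}) yields $M_{ab}(\xb)$ only up to factors of $\tfrac12$ and a sign, and these powers of $2$ (along with the $\tfrac{1}{\sqrt2}$ in the definition of $\phi^\pm$) must be reconciled with the normalization (\ref{Q_A_QQ_A}) so that the top-left block is exactly $M(\xb)$ rather than a multiple of it. A secondary point requiring care is the boundary behaviour at a vanishing part $\alpha_{2m}=0$, where the modified diagonal value $P_{00}=\tfrac12$ and the correction term in (\ref{P_A_ij-def}) enter, together with the odd-cardinality convention $x_{2n}=0$ in (\ref{M_ab_def})--(\ref{V_alpha_aj_def}); both should be checked to confirm that the free specialization $A=0$ recovers Nimmo's original formula (\ref{nimmo_Q_alpha}).
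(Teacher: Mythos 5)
Your proposal follows essentially the same route as the paper's proof: vacuum stabilization (\ref{h_vac_stabil}) to replace $\hat{h}^\pm(A)|\alpha^\pm)$ by the product of dressed operators $\phi^\pm_{\alpha_1}(A)\cdots\phi^\pm_{\alpha_{2m}}(A)$ acting on $|0\rangle$, the neutral-fermion bosonization identity (\ref{phi_bosonization}) to trade $\langle 0|\hat{\gamma}^{B\pm}$ for a product of fields divided by $\Pf(M(\xb))$, and the Pfaffian form of Wick's theorem (\ref{wick_pfaff}) with the contractions (\ref{M_met_els}), (\ref{p_j_A_VEV}) and (\ref{H_alpha_A_def}) supplying precisely the three blocks of (\ref{M_alpha_def}). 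The one point to pin down is the one you yourself flagged: the standard identity comes with doubled flow arguments, $\hat{\gamma}^{B\pm}(2[\xb]_B)$, and prefactor $2^{-n}\Pf(M(\xb))$, and this doubling is exactly what is absorbed by the $Q$-versus-$\QQ$ rescaling (\ref{Q_A_QQ_A}), as in the paper's computation.
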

\begin{remark}
Note that the numerator Pfaffian $\Pf(M^H_{\alpha}(\xb|A))$ in (\ref{gen_nimmo_A}) vanishes whenever 
any pair $x_a = x_b$ are equal, and hence  we may factor out a Vandermonde determinant $\Delta(x_1, \dots , x_{2n})$.
The denominator Pfaffian is
\be
\Pf(M(\xb)) = \prod_{1 \le a < b \le 2n} \frac{x_a-x_b}{x_a+x_b},
\label{pfaff_M_eval}
\ee
so the $\Delta(x_1, \dots , x_{2n})$  factors in the numerator and denominator cancel. This also places a
factor $ \prod_{1 \le a < b \le 2n} (x_a+x_b)$ in the numerator, which cancels the poles  from the matrix  elements $M_{ab}(\xb)$ at which $x_a + x_b$ vanishes  for any distinct pair $(a,b)$.
Therefore there are no poles, and the result is a polynomial which, since both the numerator and denominator
reverse signs under any interchange $x_a \leftrightarrow x_b$, is symmetric.
\end{remark}
\begin{proof}{(Proposition \ref{general_nimmo_prop})}
We have the standard formula \cite{DJKM1}, 
\be
\langle 0|  \phi^\pm(-x_{2n}^{-1}) \cdots \phi^\pm(-x_{1}^{-1})  = 2^{-n} \Pf(M(\xb)) \langle 0|\hat{\gamma}^{B\pm}( 2[\xb]_B) 
\label{phi_bosonization}
\ee
related to the bosonization map. From the fermionic VEV formula (\ref{A_polynoms} for the polynomials $ p_j(x_a|A)$,
 we have
\be
\left(V_{\alpha}(\xb | A)\right)_{aj} := \langle 0 | \phi^\pm(-x_a^{-1})\phi^\pm_j(A)| 0 \rangle 
 = p_j(x_a|A), \quad 1 \le a \le 2n, \ j\in \Nb,
\label{p_j_A_VEV}
\ee
and from (\ref{phi_pairing})
\be
\langle 0 |\phi^\pm(-x_a^{-1})\phi^\pm (-x_b^{-1})|0 \rangle =\tfrac 12 \tfrac {x_a-x_b}{x_a+x_b},\quad 1 \le a,b \le 2n.
\label{M_met_els}
\ee
Substituting (\ref{phi_bosonization}) and (\ref{phi_pm_A_def})  into (\ref{kappa_alpha_A_def}),  (\ref{Q_A_QQ_A}) and (\ref{Q_alpha_A_def})
and using the fact  that $\hat{h}(A)$ stabilizes the vacuum (\ref{h_vac_stabil}) gives
\bea
Q_\alpha(\xb |A) &\&= 2^{m}\langle 0 | \hat{\gamma}^{B\pm} (2[\xb]_B) \hat{h}^\pm(A) |\alpha^\pm) 
 \\
&\& = 2^{m}\langle 0 | \hat{\gamma}^{B\pm}(2[\xb]_B) \phi^\pm_{\alpha_1}(A) \cdots \phi^\pm_{\alpha_{2m}}(A)  |0\rangle
\label{Q_alpha_VEV_prod_phi_A}
 \\
&\& \cr
&\& = 2^{n+m}\frac{ \langle 0|  \phi(-x_{2n}^{-1}) \cdots \phi(-x_{1}^{-1}) \phi^\pm_{\alpha_1}(A) \cdots \phi^\pm_{\alpha_{2m}}(A)  |0\rangle}  {\Pf(M(\xb))}\\
&\& \cr
&\& =2^{n+m} \frac{\Pf\begin{pmatrix}\tfrac{1}{2}M(\xb) & V_{\alpha}(\xb| A)\\-V_{\alpha}(\xb|A)^{T} & H_\alpha(A)\end{pmatrix}} {\Pf(M(\xb))} \cr
&\& =2^{2m} \frac{\Pf\begin{pmatrix}M(\xb) & V_{\alpha}(\xb| A)\\-V_{\alpha}(\xb|A)^{T} & 2 H_\alpha(A)\end{pmatrix}} {\Pf(M(\xb))} , 
\label{Q_alpha_x_B_A_nimmo}
\eea
where Wick's theorem (Appendix \ref{wick_app}, eq.~(\ref{wick_pfaff})) has been applied in the fourth line, and the matrix elements evaluated  using  relations (\ref{H_alpha_A_def}), (\ref{p_j_A_VEV}) and (\ref{M_met_els}). 
\end{proof}

We also have an analog of Schur's Pfaffian formula
\be
Q_\alpha(\xb) = \Pf\left(Q_{ij}(\xb)\right)_{1\le i, j \le 2m},
\ee
for the functions $Q_\alpha(\xb | A)$, where $Q(\xb)$ is the skew-symmetric $2m \times 2m$ matrix with elements
\bea
Q_{ij}(\xb) &\&= \begin{cases}
Q_{(i,j)}([\xb]_B)   \qquad  \text{if } i > j \cr
\quad 0 \qquad \qquad  \quad   \text{ if } i =j \cr
-Q_{(j,i)}(\xb)  \, \qquad \text{ if } i < j, 
\end{cases}\\
&\&\cr
&\&\phantom{x} {\hskip16 pt}1\le i,j \le  2m.
 \nonumber
\eea

\begin{proposition}
\be
Q_\alpha(\xb |A) = \Pf\left(Q_{ij}(\xb|A)\right)_{1\le j \le 2m},
\ee
where $Q(\xb | A)$ is the skew-symmetric $2m \times 2m$ matrix with elements
\be
Q_{ij}(\xb |A) = \begin{cases}
Q_{(i|j)}(\xb|A)   \qquad  \text{if } i < j \cr
\quad 0 \qquad \qquad  \quad   \text{ if } i =j \cr
-Q_{(i|j)}(\xb|A)   \quad \text{ if } i > j 
\end{cases}, \quad 1\le i,j, \le  2m.
\ee
\end{proposition}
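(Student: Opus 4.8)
The plan is to imitate, step for step, the proof of the Giambelli identity established above for the KP lattice, with the determinantal form of Wick's theorem replaced by its Pfaffian form. The reason a Pfaffian rather than a determinant appears is structural: whereas the charged pairing $\langle 0|\psi_j\psi^\dag_k|0\rangle$ is bipartite, the neutral pairings $\langle 0|\phi^\pm_j\phi^\pm_k|0\rangle$ recorded in (\ref{phi_pairing}) are generically nonzero for \emph{every} ordered pair of modes, so that Wick's theorem contracts the full product symmetrically.

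First I would take as starting point the fermionic VEV representation (\ref{Q_alpha_VEV_prod_phi_A}), which exhibits $Q_\alpha([\xb]_B|A)$ as the numerical multiple $2^{-r/2}=2^{-m}$ (where $r=2m$ is the number of parts), dictated by the normalization (\ref{Q_A_QQ_A}), of the $2m$-point vacuum expectation value $\langle 0|\hat{\gamma}^{B\pm}(2[\xb]_B)\,\phi^\pm_{\alpha_1}(A)\cdots\phi^\pm_{\alpha_{2m}}(A)|0\rangle$. Because the odd BKP currents annihilate the vacuum, $\hat{\gamma}^{B\pm}(2[\xb]_B)$ stabilizes $|0\rangle$; inserting $\hat{\gamma}^{B\pm}(\hat{\gamma}^{B\pm})^{-1}$ between consecutive dressed operators, exactly as in the Giambelli computation, replaces each $\phi^\pm_{\alpha_i}(A)$ by its flow conjugate $\hat{\gamma}^{B\pm}\phi^\pm_{\alpha_i}(A)(\hat{\gamma}^{B\pm})^{-1}$, which is again a (triangular) linear combination of neutral fermion modes. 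The quantity to be evaluated is thus the vacuum expectation of a product of $2m$ linear neutral-fermion operators.

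Next I would apply Wick's theorem in its Pfaffian form (Appendix \ref{wick_app}, eq.~(\ref{wick_pfaff})) to rewrite this $2m$-point value as $\Pf(f_{ij})_{1\le i<j\le 2m}$, where $f_{ij}:=\langle 0|\hat{\gamma}^{B\pm}(2[\xb]_B)\,\phi^\pm_{\alpha_i}(A)\phi^\pm_{\alpha_j}(A)|0\rangle$ is exactly the two-point object which, by the same representation (\ref{Q_alpha_VEV_prod_phi_A}) applied to the length-two strict partition $(\alpha_i,\alpha_j)$, equals $2\,Q_{(\alpha_i|\alpha_j)}([\xb]_B|A)$, i.e.\ the $(i,j)$ entry of the matrix in the statement up to the factor $2$.

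Finally I would assemble the prefactors. Using the homogeneity of the Pfaffian, $\Pf(cN)=c^{m}\Pf(N)$ for a $2m\times 2m$ matrix $N$, the $2m$ factors of $2$ produced in the previous step combine into $2^{m}$, which cancels the overall $2^{-m}$ coming from $r=2m$ in (\ref{Q_A_QQ_A}), leaving precisely $Q_\alpha([\xb]_B|A)=\Pf\big(Q_{(\alpha_i|\alpha_j)}([\xb]_B|A)\big)$, the claimed identity. The conjugation and Wick steps are routine, being identical in form to those already performed for the Giambelli identity and for Proposition \ref{general_nimmo_prop}; the one point that genuinely needs care is this power-of-two bookkeeping, namely verifying that the normalization $2^{-r/2}$ built into the definition of $Q_\alpha$ in (\ref{Q_A_QQ_A}) matches exactly the degree-$m$ homogeneity of the Pfaffian so that no spurious constant remains. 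A vanishing final part $\alpha_{2m}=0$ causes no difficulty, being absorbed uniformly by the $\tfrac12\delta_{j,0}$ case of the pairing (\ref{phi_pairing}) and the value $p_0(x|A)=\tfrac12$.
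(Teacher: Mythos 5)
Your proposal is correct and follows essentially the same route as the paper's own proof: both start from the VEV representation (\ref{Q_alpha_VEV_prod_phi_A}), insert $\hat{\gamma}^{B\pm}\big(\hat{\gamma}^{B\pm}\big)^{-1}$ between consecutive dressed operators using vacuum stabilization, apply Wick's theorem in Pfaffian form (\ref{wick_pfaff}), and identify each two-point entry $\langle 0|\hat{\gamma}^{B\pm}(2[\xb]_B)\phi^\pm_{\alpha_i}(A)\phi^\pm_{\alpha_j}(A)|0\rangle$ with $2\,Q_{(\alpha_i|\alpha_j)}([\xb]_B|A)$ via the two-part case of the same representation. Your explicit power-of-two bookkeeping, absorbing $2^{-m}$ through the degree-$m$ homogeneity of the Pfaffian, is exactly what the paper does by placing the factor $\tfrac{1}{2}$ inside each Pfaffian entry.
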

\begin{proof}
This follows from substituting eq.~(\ref{alpha_pm_state}  into eqs.~(\ref{kappa_alpha_A_def},
 (\ref{Q_alpha_A_def}) and (\ref{Q_A_QQ_A} and using (\ref{phi_pm_A_def}) to express
 $Q_\alpha(\xb|A)$, as in eq.~(\ref{Q_alpha_VEV_prod_phi_A})
 \bea
 Q_\alpha(\xb|A) &\&=  2^{m}\langle 0 | \hat{\gamma}^{B\pm}(2[\xb]_B) \phi^\pm_{\alpha_1}(A) \cdots \phi^\pm_{\alpha_r}(A) | 0 \rangle \cr
&\& = 2^{m}\langle 0 | \hat{\gamma}^{B\pm}(2[\xb]_B) \phi^\pm_{\alpha_1}(A) \big(\hat{\gamma}^{B\pm}\big)^{-1}(2[\xb]_B)
  \cdots  \hat{\gamma}^{B\pm}(2[\xb]_B) \phi^\pm_{\alpha_r}(A) \big(\hat{\gamma}^{B\pm}\big)^{-1}(2[\xb]_B)| 0 \rangle \cr
  &\& = \Pf\left( 2\langle 0 | \hat{\gamma}^{B\pm}(2[\xb]_B) \phi^\pm_{\alpha_i}(A) \phi^\pm_{\alpha_j}(A) |0 \rangle \right)_{1\le i, j \le 2m}\cr
&\&=  \Pf\left(Q_{ij}(\xb |A)\right)_{1\le i, j \le 2m},
 \eea
 where the fact that $\hat{\gamma}^{B\pm}([\xb]_B)$ stabilizes the vacuum has been used in
 the second line and Wick's theorem (Appendix \ref{wick_app}, eq.~(\ref{wick_pfaff}) in the third.
\end{proof}
If we further choose the matrix $A$ to satisfy the vanishing conditions 
\be
A_{jk}= 0 \ \text{ if } j<0, \ k\le 0
\label{null_neg_A}
\ee
or equivalently, $\check{A}$ to satisfy 
\be
\check{A}_{jk}= 0 \ \text{ if } j<0, \ k\ge 0,
\label{null_neg_check_A}
\ee
we have
\be
P_{jk}(A) =0 \ \text{ if } j<0, k\ge 0, 
\label{P_neg_null}
\ee
and it follows that, for $j\ge 0$,  $\phi_j(A)$ is the finite triangular linear combination
\be
\phi^\pm_j(A) = \sum_{k=0}^jP_{kj}(A) \phi^\pm_k
\label{phi_j_A_series}
\ee
of  $\phi_k$'s  with $k\ge 0$.
This implies that 
\be
\langle 0 | \phi^\pm_j(A) \phi^\pm_k(A) | 0 \rangle  =0 , \quad \forall \ j, k\ge 0,  \ (j,k)\neq (0,0),
\label{G__jk_null}
\ee
so the skew matrix $H_\alpha(A)$  defined in (\ref{H_alpha_A_def}) vanishes and 
eq.~(\ref{M_alpha_def}) becomes
\be
M^0_{\alpha}(\xb | A) = \begin{pmatrix}M(\xb) & V_{\alpha}(\xb| A)\\-V_{\alpha}(\xb|A)^{T} & 0 \end{pmatrix}.
\label{M_0_alpha_def}
\ee

A particular case of such generalized Schur $Q$-functions, introduced in \cite{Iv}, corresponds 
to choosing the monic polynomials $\{p_j(x|A)\}_{j\in \Nb}$ as
\be
p_j(x|A_{int}(\ab)) := (x|\ab)^{(j)} = \prod_{i=1}^j (x-a_i), \quad j\in \Nb^+, \quad p_0(x, A_{int}(\ab)) := (x|\ab)^{(0)} = 1.
\ee
where $\ab :=(a_1, a_2, \dots)$  is an arbitrary sequence  of shared roots.
These were referred to as {\em interpolation}  analogs of the  Schur $Q$-functions in \cite{Iv} and shown 
to be BKP $\tau$-functions in \cite{Roz}.

\section{KP multipair and  BKP multipoint correlators}
\label{KP_BKP_correls}


\subsection{KP $n$-pair correlators}
\label{KP_correls}

When $\tb$ is restricted to equal the difference between two finite (normalized) power sums
\be
t_j =\tfrac{1}{j}\sum_{a=1}^{n} (x_a^j - y_a^j), \quad j \in \Nb^+
\ee
in terms of two sets of $n$ variables
\be
\xb :=(x_1, \dots, x_n), \quad \yb :=(y_1, \dots, y_n),
\ee
we denote  it as 
\be
{\bf t}=[\xb,\yb] := [\xb] - [\yb].
\label{tb_xb_yb}
\ee
For a partition $\lambda$ and group element $g(\tilde{A})$,  define the $n$-pair correlation  function
\be
K_{n,\lambda} (\xb,\yb|\tilde{A}) :=  \langle 0|\Big(\prod_{a=1}^n\psi^\dag(x_a^{-1})\psi(y_a^{-1})\Big)\hat{g}(\tilde{A}) | \lambda \rangle.
 \label{n_pair_correl_lattice} 
\ee
From the identity ({\hskip -2.5 pt} \cite{HB}, Chapt.~5.9)
\be
\langle 0|\Big(\prod_{a=1}^n\psi^\dag(x_a^{-1})\psi(y_a^{-1})\Big) 
=\big(\prod_{a=1}^n x_a y_a\big) \Delta(\xb, \yb) \langle 0|  \hat{\gamma}_+([\xb] - [\yb]), 
\ee
where
\be
\Delta(\xb,\yb):= \det\Big(\frac{1}{x_a-y_b}\Big)_{1\le a,b, \le n}
=\frac{(-1)^{\tfrac{1}{2}n(n-1)}\Delta(\xb) \Delta(\yb)}{\prod_{1\le a,b\le n}(x_a -y_b)}.
\label{Delta_def}
\ee
we have
\bea
K_{n,\lambda} (\xb,\yb |\tilde{A}) 
= \big(\prod_{a=1}^n x_a y_a\big)\Delta(\xb,\yb) \pi_\lambda(g(\tilde{A}))([\xb,\yb]).
 \label{KP_tau_lattice_xy}
\eea

Now choose $\tilde{A}$ to be strictly upper triangular, so $\hat{g}(\tilde{A})$
stabilizes the vacuum, as in (\ref{g_vac_stabil}). 
Define the $(n+r )\times (n+r)$  matrix 
 \be
\Nb^G_{n,\lambda}(\xb, \yb|\tilde{A}) :=\begin{pmatrix}N_n(\xb, \yb) & W_{n,\alpha}(\xb| \tilde{A})\\
W^*_{n,\beta}(\yb|\tilde{A})^{T} &  G_{(\alpha|\beta)}(\tilde{A})) \end{pmatrix},
\label{N_n_lambda_ratl}
 \ee
consisting of four blocks: a left upper  $n \times n$ block
 $N_n(\xb, \yb |\tilde{A})$ that is independent of $\tilde{A}$ and $\lambda$:
 \bea
\big(\Nb^G_{n,(\alpha|\beta)}(\xb, \yb|\tilde{A}\big)_{ab} &\&= \langle 0|\psi^\dag(x_a^{-1})\psi(y_b^{-1}) |0\rangle =:\frac{x_ay_b}{x_a-y_b}, \cr
 &\& \cr
&\&:= (N_n(\xb, \yb))_{ab},  \quad 1\le a, b  \le n.
 \eea
 a  right lower  $r\times r$ block
\bea
 \big(G_{(\alpha|\beta)}(\tilde{A})\big)_{ij} &\&
 =(-1)^{\beta_i} \langle 0|\psi_{\alpha_j}(\tilde{A}) \psi^*_{-\beta_i-1}(\tilde{A})|0\rangle, \quad 1\le i,j \le r \cr
 &\& = (-1)^{\beta_i} \sum_{k=0}^{\beta_i} P_{{-k-1}, \alpha_j}(\tilde{A}) P^*_{\beta_i,  k} (\tilde{A}),
 \label{G_lambda_def} 
 \eea
that is independent of $(\xb, \yb)$, and  two rectangular  $n\times r$ and $r\times n$  blocks
given by the polynomials defined in eqs.~(\ref{A_tilde_polynoms}), (\ref{A_tilde_polynoms_dual})
\bea
\big(W_{n, \alpha}(\xb | \tilde{A})\big)_{aj} &\&:= 
\langle 0 | \psi^\dag(x_a^{-1}) \psi_{\alpha_j}(\tilde{A})|0\rangle = x_a p_{\alpha_j}(x_a |\tilde{A}), \\
\big(W^*_{n, \beta}(\yb | \tilde{A})\big)_{aj} &\&=  \langle 0 | \psi(y_a^{-1}) \psi^*_{-\beta_j -1}(\tilde{A})|0\rangle 
 = y_a p^*_{\beta_j}(y_a |\tilde{A}),
\label{W_n_alpha_beta_def}
\\
&\&1\le a \le n, \ 
1\le j \le r.
\nonumber
\eea
Wick's theorem  then again implies a finite determinantal representation of 
$\pi_\lambda(g(\tilde{A}))([\xb,\yb])$ and hence of $K_{n,\lambda} (\xb,\yb |\tilde{A})$.
\begin{proposition}
\label{K_n_pair_correl}
The $n$-pair correlators are  given by the determinantal formula 
\be
K_{n,\lambda} (\xb,\yb |\tilde{A})= \det \left(\Nb^G_{n,\lambda}(\xb, \yb|\tilde{A})\right),
\label{n_pair_correl_det_A_tilde}
\ee
and therefore are rational functions,  with simple poles  at the points where $x_a - y_b$ vanish,
  \end{proposition}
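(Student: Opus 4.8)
The plan is to collapse the correlator into a single fermionic vacuum expectation value of $2(n+r)$ operators linear in the $\psi_k,\psi^\dag_k$, apply the determinantal form of Wick's theorem, and identify the four blocks of the resulting contraction matrix with those of $\Nb^G_{n,\lambda}$ in (\ref{N_n_lambda_ratl}). First I would substitute the Frobenius form (\ref{lambda_0_bsasis_state}) of $|\lambda\rangle$ into the definition (\ref{n_pair_correl_lattice}) and conjugate $\hat g(\tilde A)$ through the product $\prod_j\psi_{\alpha_j}\psi^\dag_{-\beta_j-1}$. Inserting factors $\hat g^{-1}\hat g$ between successive operators and using the dressing formulas (\ref{psi_tilde_A_def}), (\ref{psi_star_tilde_A_def}) together with the vacuum stabilization (\ref{g_vac_stabil}) of $\hat g(\tilde A)$ (available since $\tilde A$ is strictly upper triangular), this replaces each bare $\psi_{\alpha_j},\psi^\dag_{-\beta_j-1}$ by its dressed counterpart $\psi_{\alpha_j}(\tilde A),\psi^*_{-\beta_j-1}(\tilde A)$ at no cost, giving
\[
K_{n,\lambda}(\xb,\yb|\tilde A) = (-1)^{\sum_{j=1}^r\beta_j}\,\langle 0|\prod_{a=1}^n\psi^\dag(x_a^{-1})\psi(y_a^{-1})\prod_{j=1}^r\psi_{\alpha_j}(\tilde A)\psi^*_{-\beta_j-1}(\tilde A)|0\rangle .
\]

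The right-hand side is a VEV of $2(n+r)$ linear operators in which the $n+r$ annihilation-type operators are $\{\psi^\dag(x_a^{-1})\}$ and $\{\psi^*_{-\beta_i-1}(\tilde A)\}$, the $n+r$ creation-type operators are $\{\psi(y_b^{-1})\}$ and $\{\psi_{\alpha_j}(\tilde A)\}$, and all same-type pairwise contractions vanish. I would then invoke the determinantal Wick theorem (Appendix \ref{wick_app}, eq.~(\ref{wick_det})) to express the VEV as the determinant of the $(n+r)\times(n+r)$ matrix of cross-contractions, and evaluate its four blocks directly. The field--field block gives $\langle 0|\psi^\dag(x_a^{-1})\psi(y_b^{-1})|0\rangle=\tfrac{x_ay_b}{x_a-y_b}$, i.e.\ $N_n(\xb,\yb)$; the two mixed blocks give $x_a\,p_{\alpha_j}(x_a|\tilde A)$ and $y_a\,p^*_{\beta_j}(y_a|\tilde A)$ by Lemma \ref{VEV_p_j}, i.e.\ $W_{n,\alpha}$ and $W^*_{n,\beta}$; and the dressed--dressed block, after expanding in the bases (\ref{psi_tilde_A_triang}), (\ref{psi_star_tilde_A_triang}) and using $\langle 0|\psi_k\psi^\dag_l|0\rangle=\delta_{kl}$ for $k,l<0$, reproduces the sum in (\ref{G_lambda_def}).

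The delicate step, which I expect to be the main obstacle, is the sign bookkeeping. The operators are interleaved rather than grouped by type, so arranging the contraction matrix into the block form (\ref{N_n_lambda_ratl}) requires a simultaneous reordering of rows and columns whose sign must be controlled; in addition, the dressed--dressed contraction appears in the order $\psi\,\psi^*$ in (\ref{G_lambda_def}) but in the opposite order $\psi^*\,\psi$ in the natural Wick pairing (the two differing by the c-number anticommutator, here zero, hence by an overall sign), and the factor $(-1)^{\beta_i}$ attached to each $G$-row must be accounted for. I would handle this by writing $(-1)^{\sum_j\beta_j}=\prod_i(-1)^{\beta_i}$ and absorbing one factor into each row of the $G$-block, then verifying that the permutation carrying the interleaved order to the block order, together with the order-reversal sign of the dressed pairing, contributes no net residual sign, so that the prefactor is exactly cancelled and $K_{n,\lambda}(\xb,\yb|\tilde A)=\det(\Nb^G_{n,\lambda})$ holds as stated.

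Finally, rationality and the pole statement follow immediately from the block structure. Among the four blocks only the entries $\tfrac{x_ay_b}{x_a-y_b}$ of $N_n$ fail to be polynomial, since $W_{n,\alpha}$, $W^*_{n,\beta}$ are polynomial in $\xb,\yb$ and $G_{(\alpha|\beta)}$ is constant. Because each matrix entry occurs at most once in any term of the Leibniz expansion of the determinant, the factor $(x_a-y_b)^{-1}$ can appear to at most first order; hence $K_{n,\lambda}(\xb,\yb|\tilde A)$ is a rational function whose only singularities are simple poles on the loci $x_a=y_b$.
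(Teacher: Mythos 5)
Your proposal follows exactly the paper's own argument: substitute the Frobenius-coordinate form of $|\lambda\rangle$ into (\ref{n_pair_correl_lattice}), use the vacuum stabilization (\ref{g_vac_stabil}) to replace the bare operators by their dressed counterparts, then apply the determinantal Wick theorem (\ref{wick_det}) together with Lemma \ref{VEV_p_j} to identify the four blocks of $\Nb^G_{n,\lambda}$. The paper states this in two lines and leaves the sign bookkeeping and the rationality/simple-pole observation implicit, so your extra care on those points is a faithful elaboration rather than a different route.
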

  \begin{remark}
  Whenever two $x_a$'s or two $y_a$'s are equal,  two of the rows or columns of (\ref{N_n_lambda_ratl_restr})
 coincide, and the determinant (\ref{n_pair_correl_det_A_tilde}) vanishes. Therefore, we can factor 
 the product $\Delta(\xb)\Delta(\yb)$ of
 Vandermonde determinants from  the expression for $K_{n,\lambda} (\xb,\yb |\tilde{A})$. It follows from (\ref{KP_tau_lattice_xy})
 that 
 \be
 \big(\prod_{a=1}^nx_a y_a\big) \pi_\lambda(g(\tilde{A}))([\xb,\yb]) =\frac{ K_{n,\lambda} (\xb,\yb |\tilde{A})}{\Delta(\xb, \yb)}
 \ee
  is a polynomial in the position variables $\{x_a, y_a\}_{1\le a \le n}$ of degree no greater than $2n +|\lambda| -r$.
  \end{remark}
  \begin{proof}{(Proposition \ref{K_n_pair_correl})} 
  Eqs.~(\ref{n_pair_correl_lattice}) and (\ref{psi_alpha_beta_vac_n}) imply
   \bea
  K_{n,\lambda} (\xb,\yb|\tilde{A}) &\&=  (-1)^{\sum_{j=1}^r \beta_j }  \langle 0|\prod_{a=1}^n\psi^\dag(x_a^{-1})\psi(y_a^{-1})
 \prod_{j=1}^r \psi_{\alpha_j}(\tilde{A}) \psi^\dag_{-\beta_j -1} (\tilde{A})|0 \rangle \cr
 &\& = \det \left(\Nb^G_{n,\lambda}(\xb, \yb|\tilde{A})\right),
  \eea
  where Wick's theorem (Appendix \ref{wick_app}, eq.~(\ref{wick_det})) and   Lemma \ref{VEV_p_j} have been applied in the second line.
  \end{proof}
 If we furthermore choose $\tilde{A}$ such that
   \be
   \tilde{A}_{jk} = 0 \ \text { if } j<0, \ k \ge 0, \ra  P_{jk}(\tilde{A}) = 0 \ \text{ if } j<0, \ k\ge 0,
   \label{A_tilde_null_cond}
   \ee
   which can always be done without changing the polynomials $\{p_j(x |\tilde{A})\}_{j\in \Nb}$ defined in (\ref{p_j_tilde_A_def}),
it follows from eq.~(\ref{G_lambda_def}),  that 
\be
 \big(G_{\lambda}(\tilde{A})\big)_{ij}  =0, \ \forall \ i,j \in \Nb.
 \ee
 and hence
 \be
\Nb^0_{n,\lambda}(\xb, \yb|\tilde{A}) =\begin{pmatrix}N_n(\xb, \yb) & W_{n,\alpha}(\xb| \tilde{A})\\W_{n,\beta}(\yb|\tilde{A})^{T} &  0 \end{pmatrix}.
\label{N_n_lambda_ratl_restr}
 \ee


\subsection{BKP $2n$-point correlators}
\label{BKP_correls}

 Let $\xb =(x_1, \dots, x_{2n})$ with all $x_a$'s distinct, and define the $2n$-point BKP correlation function
 \be
 K^B_{2n,\alpha}(\xb):= \langle 0 | \phi^\pm(-x_{2n}^{-1}) \cdots  \phi^\pm(-x_{1}^{-1})\hat{h}^\pm(A) | \alpha^\pm),
 \label{BKP_2n_point_def}
 \ee
 Eqs.~(\ref{phi_bosonization})  and (\ref{Q_alpha_A_def}) imply that $K^B_{2n,\alpha}(\xb)$ is related to
 $Q_\alpha([\xb]_B|A)$ by
 \be
K^B_{2n,\alpha}(\xb)  = 2^{-2n-2m} \Pf(M(\xb))Q_\alpha(\xb|A)= 2^{-2n-2m}\Pf(M^G_{\alpha}(\xb|A)),
 \ee
 where $Q_\alpha([\xb]_B|A)$ is given by the generalized Nimmo formula (\ref{gen_nimmo_A})
 and $M^H_{\alpha}(\xb|A)$ is defined by (\ref{M_alpha_def}) or, if the additional 
 conditions (\ref{null_neg_A}) - (\ref{P_neg_null}) are fulfilled, by (\ref{M_0_alpha_def}).


\section{Examples}
\label{examples}
A large number of examples of lattices of polynomial KP Schur functions, labelled by partitions $\lambda$,
that can be expressed in terms of systems of monic polynomials 
 $\{p_j(x |\tilde{A})\}_{j\in \Nb}$ via formula (\ref{s_lambda_tilde_A}) are detailed in \cite{HL, SV, Mac2}.   
 These include:  several of the various types of generalized Schur functions studied in  \cite{Mac2}, 
  the {\em factorial}\,  Schur functions  \cite{BL1, BL2},  and orthogonal and symplectic  group characters \cite{HL, SV}. 

Two cases of particular interest are the generalized Laguerre multivariate polynomials \cite{La, Ol},  
given by the bialternant formula (\ref{s_lambda_tilde_A}), with the 
system of monic polynomials $\{p_j(x | \tilde{A})\}_{j\in \Nb^+}$ chosen to be the associated Laguerre polynomials, 
and a $2$-parameter generalization of these \cite{HO2, Ol}, which we refer to as {\em double Laguerre polynomials}.
\begin{example}

Consider  the special choice (see \cite{HO2, OrSc})
\bea
 \tilde{A}^\rb(\pb)_{jk}&\&= p_{k-j} r_{j+1} \cdots r_k, \quad k>j 
 \label{A_r_p_jk},
 \eea
  parametrized  by two sets of  infinite numbers of parameters ${\bf r}:=\{r_j\}_{j \in \Zb}$ and  ${\bf p}=(p_1,p_2,p_3,\dots)$.
This gives rise to the following polynomial expression for the KP $\tau$-function defined
 in eqs.~(\ref{pi_lambda_A_def}), (\ref{s_lambda_A_def})  (see Example 4.4 in \cite{HO2}),
\be
s_\lambda(\tb | \tilde{A}^\rb(\pb))  = s_\lambda(\tb)+
\,\sum_{\rho < \lambda}
\,r_{\lambda/\rho}\,s_{\lambda/\rho}({\bf p})\,s_{\rho}(\tb),
\label{s_lambda_r_p_expansion}
 \ee
 where $s_{\lambda/\rho}({\bf p})$  is the skew Schur function corresponding to the skew
partition $\lambda/\rho$, and
\be
 r_{\lambda/\rho} :=\prod_{i=1}^{\ell(\lambda)} \prod_{j=\rho_i+1}^{\lambda_i}r_{j-i} =\frac{ r_\lambda}{ r_\rho}
 \label{skew_content_prof_r}
\ee

Evaluating at  $\tb = [\xb]$, the $\tau$-function $s_\lambda([\xb] | \tilde{A}^\rb(\pb))$ is given by the bialternant 
formula (\ref{s_lambda_tilde_A}), in  which the  system of monic polynomials  
(\ref{p_j_A_tilde_x}), (\ref{p_star_j_A_tilde_x}) is:
\bea
p_j(x | \tilde{A}^{\rb}(\pb))&\& := x^{-1}\langle 0|\psi^\dag(x^{-1})\hat{g}(\tilde{A}^\rb(\pb)) \psi_j|0\rangle =
x^j +\sum_{k=1}^{j}   r_j \cdots r_{j-k+1} s_{(k)}({\bf p}) x^{j-k}
\label{p_j_p_r} 
\\
p_j^*(y| \tilde{A}^{\rb}(\pb))&\& := y^{-1} \langle 0|\psi(y^{-1})\hat{g}(\tilde{A}^\rb(\pb)) \psi^\dag_{-j-1}|0\rangle =
\ y^j+\sum_{k=1}^j   r_{-j} \cdots r_{k-j-1} s_{(k)}(-{\bf p}) y^{j-k} .
\label{p_star_j_p_r} \cr&\&
\eea 

  We also have
\bea
\langle 0|\hat{g}(\tilde{A}^\rb(\pb)) \psi_i\psi^\dag_{-j-1}|0\rangle
&\& =
 r_{-j}r_{1-j}\cdots r_{i}  \langle 0|\hat{\gamma}_+(\pb)\psi_i\psi^\dag_{-j-1}|0\rangle \cr
 &\& =(-1)^j r_{-j}r_{1-j}\cdots r_{i} s_{(i|j)}(\pb).
 \label{G_ij_A}
 \eea
Therefore, choosing $\tb=[\xb] - [\yb]$, as in (\ref{tb_xb_yb}), the resulting $n$-pair correlation function
$K_{n,\lambda} (\xb,\yb |\tilde{A}^\rb(\pb))$  is given by (\ref{n_pair_correl_det_A_tilde}),
where the system of polynomials defining the rectangular matrices
$W_{n, \alpha}(\xb | \tilde{A}^\rb(\pb)$ and  $W^*_{n, \beta}(\yb | \tilde{A}^\rb(\pb))$ appearing in
the $(n+m) \times (n+m)$ matrix $\Nb^G_{n,\lambda}(\xb, \yb|\tilde{A}^\rb(\pb))$
defined by (\ref{N_n_lambda_ratl}) is  (\ref{p_j_p_r}), (\ref{p_star_j_p_r}) 
and the matrix elements of $G_{\alpha,\beta}(\tilde{A}^r({\bf p}))$ are
\be
\left(G_{\alpha,\beta}(\tilde{A}^r({\bf p}))\right)_{ij}:= 
(-1)^{\beta_i}\langle 0|\hat{g}(\tilde{A}^\rb(\pb)) \psi_{\alpha_j}\psi^\dag_{-\beta_i-1}|0\rangle
=r_{-\beta_i}\cdots r_{\alpha_j} s_{(\alpha_j|\beta_i)}({\bf p}) . 
\label{G_alpha_beta_A}
\ee

\begin{remark}
The multivariate  Laguerre symmetric polynomials \cite{Ol, La}  are  the
 special case of (\ref{s_lambda_r_p_expansion}) obtained  by  choosing
\bea
\pb &\& =\tb_0:=(1,0,0,\dots)
\label{p_0_def}
\\
r_j&\&=r_j(z):=-j (z+j).
\label{r_j_laguerre}
\eea
 Then
(\ref{p_j_p_r}), (\ref{p_star_j_p_r}) reduce to the usual  associated Laguerre
polynomials (normalized to be monic)
\bea
 p_j(x | \tilde{A}^{\rb(z)}(\tb_0)) &\&=  
 \sum_{k=0}^{j}(-1)^{j-k}  {z+j \choose z+k}{j \choose k} x^{k} \cr
&\&= (-1)^j j!L_j^{(z)}(x),
\label{p_j_Lag}
\cr
&\& \\
 p_j^*(y| \tilde{A}^{\rb(z)}(\tb_0)) &\&=   
 \sum_{k=0}^{j}(-1)^{j-k}  {-z+j \choose -z+k}{j \choose k} y^{k}\cr
&\&=  p_j^*(y| \tilde{A}^{\rb(-z)}(\tb_0)).
\label{p_j_Lag_star} \cr
&\&
\eea
For this case, since $r_0=0$, it follows from (\ref{G_alpha_beta_A}) that the matrix $G_{(\alpha|\beta)}(\tilde{A}^{\rb(z)}(\pb))$
vanishes.

The multivariate  {\em double} Laguerre symmetric polynomial  \cite{Ol} (see also Remark 4.2 in \cite{HO2})  is 
obtained from (\ref{s_lambda_r_p_expansion}) by  choosing
\bea
\pb &\& =\tb_0:=(1,0,0,\dots)
\label{p_0_def_2}\\
r_j&\&=r_j(z,z'):=-(z+j)(z'+j),
\label{r_j_laguerre_2}
\eea
so that
\be
r_j(z) = r_j(z,0).
\ee
Then
(\ref{p_j_p_r}), (\ref{p_star_j_p_r}) reduce to 
\bea
p_j(x | \tilde{A}^{\rb(z, z')}(\tb_0)) &\&=  
\sum_{k=0}^{j}(-1)^{j-k}  {z+j \choose z+k}{z'+j \choose z'+k} x^{k},
\label{p_j_double_Lag}
\cr
&\& \\
p_j^*(y| \tilde{A}^{\rb(z, z')}(\tb_0)) &\&=   
\sum_{k=0}^{j}(-1)^{j-k}  {-z+j \choose -z+k}{-z'+j \choose -z'+k} y^{k}\cr
&\&= p_j(x | \tilde{A}^{\rb(-z, -z')}(\tb_0)).
\label{p_j_double_Lag_star} \cr
&\&
\eea

\end{remark}

We also have the generalized Schur $Q$-functions  $Q_\alpha(\xb | A^{\rb}(\pb_B))$,
with (see \cite{HO2, Or})
\bea
A^{\rb}(\pb_B)_{jk} &\&=\begin{cases}  (-1)^k \, p_{-j-k}\, r_{j+1} \cdots r_{-k} , \quad j+k < 0 
  \text{ and odd}\cr
= 0 \quad \text{ if } j+k \ \text{ is even or } j+k \ge 0,
  \end{cases}
  \cr
  &\&
 \label{A_r_p_B_jk}
\eea
given by the generalized Nimmo formula (\ref{gen_nimmo_A}),  with 
 the rectangular matrix $V_\alpha(\xb |A^{\rb}(\pb))$ defined in (\ref{V_alpha_def}) given by 
the polynomials 
\bea
p_j(x|(A^\rb(\pb_B))) &\&:=  \langle 0|\phi^\pm(-x^{-1})\phi^\pm_j(A^\rb(\pb_B)) |0\rangle   \cr
&\&=x^j+\sum_{k=1}^i r_{j}  \cdots r_{j-k+1}  x^{j-k} q_k(\tfrac 12 \pb_B).
\label{B-pol}
\eea
which coincide with  the $p_j(x | \tilde{A}^{\rb}(\pb))$'s of eq.~(\ref{p_star_j_p_r}) when $\pb=(p_1, 0, p_3, 0 , \dots , 0 ,p_{2j-1}, \dots)$,
and the matrix elements of $H_{\alpha}({A}^r({\bf p}_B))$  are
\bea
\left(H_{\alpha}({A}^\rb({\bf p}_B))\right)_{ij} &\&:=  
\langle 0| \phi_{\alpha_i}(\tilde{A}^\rb(\pb_B)) \phi_{\alpha_j}(\tilde{A}^\rb(\pb_B))  |0\rangle  \cr
&\&=\frac 12 r_1 \cdots r_{\alpha_i} r_1\cdots r_{\alpha_j} Q_{(\alpha_i,\alpha_j)}( \tfrac 12 {\bf p}_B)  . 
\eea

Similarly to (\ref{s_lambda_r_p_expansion}), this can be expressed as a sum over ordinary (scaled) 
Schur $Q$-functions $\QQ_\beta$,
with $\beta\leq \alpha$
\be
\QQ_{\alpha}([\xb]_B | {A}^{\rb}({\bf p}_B)) = \QQ_{\alpha}({[\xb]_B})+
\sum_{\beta < \alpha}\, 
r^B_{\alpha /\beta} \,\QQ_{\alpha/\beta}({\bf p}_B)\,\QQ_{\beta}({[\xb]_B}),
\label{Q_alpha_r_p_expansion}
\ee
 where 
\be
\QQ_{\alpha/\beta}([\pb]_B) = (\alpha|\gamma^{B}([\pb]_B) | \hat{\beta})
\ee
  is the (scaled) skew Schur Q-function corresponding to the skew partition $\alpha/\beta$  and
\be 
 r^B_{\alpha/\beta}   :=\prod_{i=1}^{2m} \prod_{j=\beta_i+1}^{\mu_i} r_j =\frac{r^B_{\alpha}}{ r^B_\beta}
 \label{skew_strict_content_prod_r}
\ee 
is the content product over nodes of the {\em shifted} skew Young diagram  \cite{Iv} of $\alpha/\beta$.
 
\end{example}

\begin{example}
For KP single pair correlation functions with $\Nb^0_{n,\lambda}(x, y|\tilde{A}_{fac})$ of
the form (\ref{N_n_lambda_ratl_restr}), for $n=1$, if the Frobenius rank $r$ of the partition function $\lambda$ 
is greater than $1$ the determinant in (\ref{n_pair_correl_det_A_tilde}) vanishes, so the only nonvanishing pair
correlators are the trivial one, which is
\be
K_{2, \emptyset} (x,y) | \tilde{A}) =\frac{xy}{y - x}
\ee
and the one for a hook partition, which is
\be
K_{2, (\alpha_1|\beta_1))} (x,y) | \tilde{A}) = -p_{\alpha_1}(x|\tilde{A}) p_{\beta_1}(y | \tilde{A}).
\ee
If we do not impose the null conditions (\ref{A_tilde_null_cond}) on $\tilde{A}$, so  the matrix
$G_{(\alpha|\beta)}(\tilde{A})$ is, in general, nonvanishing, we have
\be
K_{2, (\alpha_1|\beta_1))} (x,y) | \tilde{A}) 
= \gamma\frac{xy}{y - x} -p_{\alpha_1}(x|\tilde{A}) p_{\beta_1}(y | \tilde{A}).
\ee
where
\be
\gamma = \big(G_{(\alpha_1, \beta_1)}(\tilde{A})\big)_{11}
\ee
\end{example}

\begin{example}
For BKP $2$-point correlation functions $K^B_{2, \alpha}(x_a,x_2)|A)$  with $\Mb^0_\alpha(x_1,x_2)|A)$ of the form (\ref{M_0_alpha_def}),  if the cardinality $r=2m$ is greater than $2$, $\Mb^0_\alpha(x_1,x_2)|A)$,   which can have rank at most $4$, 
has vanishing Pfaffian. Therefore, the only nonvanishing $2$-point correlations functions are the one for the trivial partition
\be
K^B_{2, \emptyset} (x_1,x_2) | \tilde{A}) = \frac{x_1-x_2}{x_1+x_2}
\ee
and  the one where $\alpha$ has two parts
\be
K^B_{2, (\alpha_1, \alpha_2)} (x_1, x_2) | A) = p_{\alpha_1}(x_2)p_{\alpha_2}(x_1) -  p_{\alpha_1}(x_1)p_{\alpha_2}(x_2). 
\ee
If we do not impose the null conditions (\ref{null_neg_A})- (\ref{P_neg_null}) on $A$, so the matrix
$H_{\alpha}(\tilde{A})$ is, in general, nonvanishing, we have
\be
K^B_{2, (\alpha_1, \alpha_2)} (x_1,x_2) | \tilde{A}) =\sigma\frac{x_1-x_2}{x_1+x_2}
+ p_{\alpha_1}(x_2)p_{\alpha_2}(x_1) -  p_{\alpha_1}(x_1)p_{\alpha_2}(x_2),
\ee
where
\be
\sigma=  \big(H_{(\alpha_1, \alpha_2)}(A)\big)_{12}.
\ee
\end{example}

\begin{appendix}
\section{Appendix. Wick's theorem}
\label{wick_app}

The following summarizes the implication of Wick's theorem for fermionic VEV's
in a form  that is used repeatedly in this work ({\hskip - 2.5 pt} see,  e.g., \cite{HB}, Chapt. 5.1).

 \begin{theorem}[Wick's theorem]
 \label{wick}
The  vacuum expectation value of the product  of an even number of  linear elements $\{w_i\}_{1\le i \le 2n}$
 of the  fermionic Clifford algebra is
\be
\langle 0| w_1w_2\cdots w_{2n}|0\rangle= 
\Pf\left(W_{ij}\right)_{1\le i,j \le 2n}
\label{wick_pfaff}
\ee
where $\{W_{ij}\}_{1\le i,j \le 2n}\}$ are  elements of the skew $2n \times 2n$ matrix defined by
\be
W_{ij} = \begin{cases}
\langle 0| w_i w_j|0\rangle \quad \ \ \text{ if } i<j \cr
\qquad 0  \quad \qquad  \  \ \text{ if } i=j \cr
-\langle 0| w_i w_j|0\rangle \quad \text{ if } i> j
\end{cases}, 
\ee
whereas the VEV of the product of an odd number vanishes
\be
\langle 0| w_1w_2\cdots w_{2n+1}|0\rangle = 0.
\ee
 \end{theorem}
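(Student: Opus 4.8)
The plan is to treat the odd and even cases separately, proving the even case by induction on $n$ in a way that reproduces, term by term, the cofactor (row) expansion of the Pfaffian. Throughout I use only two structural facts about a linear element $w$ of the Clifford algebra: first, that the anticommutator $[w_i,w_j]_+$ of any two linear elements is a scalar multiple of the identity, which is immediate from the canonical anticommutation relations (\ref{psi_psi_dag_anticomm}) (resp.\ (\ref{phi_pm_anticomm}) in the neutral case); and second, that $w$ splits canonically as $w=a+c$, where the annihilation part $a$ satisfies $a|0\rangle=0$ and the creation part $c$ satisfies $\langle 0|c=0$. This splitting is read off directly from the vacuum annihilation conditions (\ref{charged_fermi_vac_annih})--(\ref{charged_fermi_dual_vac_annih}) (resp.\ (\ref{neutral_fermi_vac_annih})): the annihilation part collects the generators that kill $|0\rangle$ on the right, the creation part the rest. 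Taking VEVs of the scalar anticommutators then yields $[a_i,a_j]_+=[c_i,c_j]_+=0$ and, for $i<j$, the single surviving contraction $\langle 0|w_iw_j|0\rangle=[a_i,c_j]_+=W_{ij}$.

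The odd case is immediate from a parity argument: the Clifford algebra carries the involutive automorphism $\sigma$ that fixes the identity and sends every generator, hence every linear element, to its negative; since $\sigma$ preserves (\ref{psi_psi_dag_anticomm}) and fixes both $|0\rangle$ and $\langle 0|$, applying it to $\langle 0|w_1\cdots w_{2n+1}|0\rangle$ multiplies the VEV by $(-1)^{2n+1}=-1$, forcing it to vanish.

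For the even case I would induct on $n$, the base case $n=1$ being the tautology $\langle 0|w_1w_2|0\rangle=W_{12}=\Pf\left(\begin{smallmatrix}0 & W_{12}\\ -W_{12} & 0\end{smallmatrix}\right)$. For the inductive step, first replace $w_1$ by its annihilation part using $\langle 0|w_1=\langle 0|a_1$, then commute $a_1$ rightward through $w_2,w_3,\dots$ one factor at a time via $a_1w_j=[a_1,w_j]_+-w_ja_1=W_{1j}-w_ja_1$, the scalar $[a_1,w_j]_+=[a_1,c_j]_+=\langle 0|w_1w_j|0\rangle$ being the contraction computed above. Each passage deposits a term $(-1)^{j}W_{1j}\langle 0|w_2\cdots\widehat{w_j}\cdots w_{2n}|0\rangle$, the sign accumulating one factor of $-1$ per transposition, and the residual term in which $a_1$ has been carried all the way to the right vanishes since $a_1|0\rangle=0$. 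This produces
\be
\langle 0|w_1w_2\cdots w_{2n}|0\rangle=\sum_{j=2}^{2n}(-1)^{j}W_{1j}\,\langle 0|w_2\cdots\widehat{w_j}\cdots w_{2n}|0\rangle .
\ee
Applying the inductive hypothesis to each length-$(2n-2)$ factor identifies it with $\Pf(W_{\hat 1\hat j})$, the Pfaffian of $W$ with rows and columns $1$ and $j$ deleted, so the right-hand side is exactly the Laplace expansion of $\Pf(W)$ along its first row.

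The one genuinely delicate point is the sign bookkeeping: one must verify that the $(-1)^{j}$ generated by the anticommutation recursion coincides with the cofactor signs in the first-row Pfaffian expansion, which is precisely where the ordering convention in the definition of $W_{ij}$ enters. A secondary issue is that in the applications the $w_i$ are infinite linear combinations of generators (the field operators $\psi(z),\phi^\pm(z)$ of (\ref{psi_fields}), (\ref{phi_pm_fields}) and the dressed operators), so the rearrangements are to be read as identities of formal series in the spectral parameters; since each anticommutator $[w_i,w_j]_+$ nonetheless remains a well-defined scalar, given by the explicit pairings used throughout the paper, the manipulations are legitimate term by term and no analytic subtlety beyond convergence of these pairings arises.
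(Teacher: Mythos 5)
Your proof is essentially correct, but note that there is no internal proof to compare it against: the paper states Theorem \ref{wick} in its appendix purely as a summary of a known result, deferring the proof to \cite{HB}, Chapt.~5.1. What you have written is the standard self-contained induction, and it checks out: the contraction identity $[a_1,w_j]_+=[a_1,c_j]_+=\langle 0|w_1w_j|0\rangle$, the recursion $\langle 0|w_1\cdots w_{2n}|0\rangle=\sum_{j=2}^{2n}(-1)^{j}W_{1j}\langle 0|w_2\cdots\widehat{w_j}\cdots w_{2n}|0\rangle$ obtained by pushing $a_1$ to the right, and its identification with the first-row expansion $\Pf(W)=\sum_{j=2}^{2n}(-1)^{j}W_{1j}\Pf(W_{\hat 1\hat j})$ all carry the correct signs, since passing $a_1$ over $j-2$ factors yields $(-1)^{j-2}=(-1)^{j}$. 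Your device of proving $[a_i,a_j]_+=[c_i,c_j]_+=0$ by taking the VEV of a scalar anticommutator is a clean shortcut that avoids inspecting generators, and the parity argument for the odd case is valid (the sign automorphism is implemented on $\FF$ by the fermion-parity operator, which fixes the vacuum).

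One point needs repair: your splitting prescription for the neutral case, read off from (\ref{neutral_fermi_vac_annih}), fails on the zero mode. The conditions (\ref{neutral_fermi_vac_annih}) say nothing about $\phi^\pm_0$, and in fact $\phi^\pm_0$ neither annihilates $|0\rangle$ nor is annihilated by $\langle 0|$, as certified by the self-pairing $\langle 0|\phi^\pm_0\phi^\pm_0|0\rangle=\tfrac12$ in (\ref{phi_pairing}); consequently no decomposition $w=a+c$ with $a|0\rangle=0$, $\langle 0|c=0$ exists inside the span of the $\phi^\pm_j$ alone. This is not an idle corner case for this paper: Wick's theorem is applied to states $|\alpha^\pm)$ which may contain $\phi^\pm_0$ (a vanishing part $\alpha_{2m}=0$ is explicitly allowed in the Nimmo-type formulas). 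The fix is immediate, and worth stating: by (\ref{phi_pm_def}), every linear element of the neutral algebra is also a linear element of the charged Clifford algebra, where the splitting does exist — e.g.\ $\phi^+_0=\tfrac{1}{\sqrt2}\psi_0+\tfrac{1}{\sqrt2}\psi^\dag_0$ has creation part $\tfrac{1}{\sqrt2}\psi_0$ and annihilation part $\tfrac{1}{\sqrt2}\psi^\dag_0$ — and since your induction uses only the two structural facts, it then applies verbatim and reproduces the $\tfrac12\delta_{j,0}$ entries of (\ref{phi_pairing}) as honest contractions. With that emendation the argument is complete; the formal-series caveat you raise at the end is handled exactly as the paper handles it, by evaluating all pairings in domains of the spectral parameters where they converge.
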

 In particular, if  half the $w_i$'s consist of  creation operators $\{u_i\}_{i=1, \dots , n}$
 and the other  half annihilation operators  $\{v^\dag_i\}_{i=1, \dots , n}$, so that
 \be
 \langle 0| u_i u_j |0 \rangle =0, \quad   \langle 0| v^\dag_i v^\dag_j |0 \rangle =0,\quad 1\le i,j \le n,
 \ee
 then (\ref{wick_pfaff})  reduces to
 \be
 \langle 0| u_1 v^\dag_1 \cdots u_n v^\dag_n|0\rangle=  \det \left(\langle 0 | u_i v^\dag_j |0 \rangle\right)_{1\le i, j \le n}.
 \label{wick_det}
 \ee

\end{appendix}

\bigskip
\bigskip
\noindent 
\small{ {\it Acknowledgements.}  The work of J.H. was supported by the Natural Sciences and Engineering Research Council of Canada (NSERC); that of A. Yu. O.  by RFBR grant 18-01-00273a and state assignment  N 0149-2019-0002. \hfill
\bigskip



\end{document}